\newlist{subquestion}{enumerate}{1}
\setlist[subquestion,1]{label=(\alph*)}
\newcommand{\vect}[1]{\ensuremath{\bm{{#1}}}}
\newcommand{\ket}[1]{\ensuremath{\left|{#1}\right\rangle}}
\newcommand{\bra}[1]{\ensuremath{\left\langle{#1}\right |}}
\newcommand{\diego}[1]{ #1}
\newcommand{\Trr}[1]{\textrm{Tr}\left[#1\right]}
\newcommand{\TrP}[2]{\textrm{Tr}_{#1}\left[#2\right]}
\newcommand{\beq}{\begin{equation}}
\newcommand{\eeq}{\end{equation}}
\newcommand{\bse}{\begin{subequations}}
\newcommand{\ese}{\end{subequations}}\newcommand{\bea}{\begin{eqnarray}}
\newcommand{\eea}{\end{eqnarray}}
\newcommand{\bit}{\begin{itemize}}
	\newcommand{\eit}{\end{itemize}}
\newcommand{\bpmatrix}{\begin{pmatrix}}
	\newcommand{\epmatrix}{\end{pmatrix}}
\newcommand{\be}{\begin{equation}}
\newcommand{\ee}{\end{equation}}
\newcommand{\ben}{\begin{eqnarray}}
\newcommand{\een}{\end{eqnarray}}
\newcommand{\QJSD}{\text{QJSD}}
\newcommand{\Sr}[2]{\text{S}_r\!\left(#1||#2\right)}
\newcommand{\av}[1]{\overline{#1}}
\newtheorem{proposition}{Proposition}
\begin{document}

\title{Transmission distance in the space of quantum channels}

\author{Diego G. Bussandri$^{1,2}$, Pedro W. Lamberti$^{2,3,4}$, Karol $\dot{\text{Z}}$yczkowski$^{4,5}$}
\affiliation{$^1$Instituto  de  F\'isica  La  Plata  (IFLP)  and  Departamento  de  F\'isica,  Facultad  de Ciencias Exactas, Universidad Nacional de La Plata, C.C. 67, 1900 La Plata, Argentina}
\affiliation{$^2$Consejo Nacional de Investigaciones Científicas y Técnicas de la República Argentina (CONICET), Av. Rivadavia 1917, C1033AAJ, CABA, Argentina}
\affiliation{$^3$Facultad de Matem\'atica, Astronom\'{\i}a, F\'{\i}sica y Computaci\'on, Universidad Nacional de C\'ordoba, \\ Av. Medina Allende s/n, Ciudad Universitaria, X5000HUA C\'ordoba, Argentina}
\affiliation{$^4$Faculty of Physics, Astronomy and Applied Computer Science, Institute of Theoretical Physics, Jagiellonian University, ul. {\L}ojasiewicza 11, 30–348, Kraków, Poland}
\affiliation{$^5$Center for Theoretical Physics, Polish Academy of Sciences, Al. Lotników 32/46, 02-668 Warszawa, Poland}

\begin{abstract}
We analyze two ways to obtain distinguishability measures between quantum maps by employing the square root of the quantum Jensen-Shannon divergence, which forms a true distance in the space of density operators. The arising measures are the transmission distance between quantum channels and the entropic channel divergence. We investigate their mathematical properties and discuss their physical meaning. Additionally, we establish a
chain rule for the entropic channel divergence, which implies the amortization collapse, a relevant result with potential applications in the field of discrimination of quantum channels and converse bounds. 
Finally, we analyze the distinguishability between two given Pauli channels and study exemplary Hamiltonian
dynamics under decoherence.
\end{abstract} 

\date{April 6, 2023}
\pacs{}
\maketitle

\section{Introduction}\label{sec:Intro}

\noindent The notion of quantum channel distinguishability is at the core of quantum information theory, and it plays a central role in a variety of contexts. Different works investigate the mathematical and physical conditions for a suitable measure of distance between quantum maps and, correspondingly, various such measures have been introduced, with trace distance and quantum fidelity being the most widely used \cite{Gilchrist2005}. 
Constructing a universal distance measure
in the space of quantum maps
that fulfils all the suitable requirements is strongly motivated by the recent literature. 
However,  finding  such a 
\textit{gold standard} 
is rather difficult \cite{Puchaa2011},
and one tries to identify 
distance measures capable to
compare 
theoretically idealized quantum channels
 with their noisy experimental implementations.

Within the list of relevant requirements for a
measure studied, an important property is the triangle inequality, as it allows one
to construct a true distance
and it  serves as a tool to establish other
features, including the chaining property. Recently, Virosztek \cite{Virosztek2021} and Sra \cite{Sra2021} demonstrated that the square root of the quantum Jensen-Shannon divergence (QJSD),  
satisfies the triangle inequality for 
any quantum states of an arbitrary
finite dimension.
This extensively used entropic distinguishability measure
has appealing properties 
and it has been widely used in quantum information theory \cite{Audenaert2014,Radhakrishnan2016,Megier2021,Settimo2022}.

The main aim of this work is to extend 
the transmission distance,
defined as square root of 
the quantum Jensen-Shannon divergence \cite{Briet2009}, to the space of quantum channels.
We study two different approaches to carry out this goal:
Making use of the Choi--Jamio\l{}kowski isomorphism,
we arrive at the
\textit{transmission distance between quantum channels}. Furthermore,
by optimizing the channel output 
over all possible inputs, 
we investigate the \textit{entropic channel divergence}. 

Going beyond the required properties for having well-behaved measures of distance between quantum operations, we establish a chain rule for the entropic channel divergence. This chain rule was originally proposed in Eq. (4) of Ref. \cite{Fang2020} for the quantum relative entropy, motivated by its classical counterpart. \diego{However, the extension of the quantum relative entropy 
	to the space of quantum maps through optimization of its inputs does not satisfy this particular chain rule.}

We address the issue of the \textit{amortized distinguishability of quantum channels}, relevant to analyze the problem of hypothesis testing for quantum channels \cite{Wilde2020a}. The idea behind amortized distance measures is to consider two quantum states as inputs of two different quantum channels to explore the biggest distance between these channels without considering the original distinguishability that the input states may have. The chain rule 
leads to another property called \textit{amortization collapse} \cite{Wilde2020a}, which occurs if the channel divergence is equal to its amortized version. \diego{In such a case, one obtains useful single-letter converse bounds on the capacity of adaptive channel discrimination protocols \cite{Leditzky2018a}. }

Finally, we will examine two specific applications for the entropic distinguishability measures: a) Pauli channels, with a focus on studying noise
in the standard quantum teleportation channel \cite{Shahbeigi2018}; and   b) the distinguishability of Hamiltonians under decoherence, a particular case within the discrimination of superoperators proposed in \cite{Childs2000,Raginsky2001}.  

This paper is organized as follows. In Sec. \ref{sec:prelim} we summarize the main properties of the transmission distance in the space of quantum states. In Sec. \ref{sec:transmissiondistancebetQC} we introduce the transmission distance between quantum channels through the Choi--Jamio\l{}kowski isomorphism and study its properties.  The entropic channel divergence is proposed
and analyzed in Sec. \ref{sec:entropicdisting}. 

The chain rule and the amortization collapse of the entropic channel divergence are presented in Sec. \ref{sec:chainrule} and in Sec. \ref{sec:connection} we consider a set of quantum maps, for which the proposed measures are equal. In Sec. \ref{sec:PhysInt} the physical motivations and operational meanings of the introduced distances is discussed. In Sec. \ref{sec:applications}, we compute analytically the distances for Pauli channels and for arbitrary Hamiltonians under decoherence. Sec. \ref{sec:concludingR} concludes the article with a brief review of results obtained.

\section{QJSD and transmission distance in the space of quantum states}\label{sec:prelim}

\noindent Let $\mathcal{M}_N$ be the space of density matrices $\rho$ (positive and normalized operators, $\rho \geq 0$ and ${\rm Tr} \rho =1$, respectively) defined on a 
$N$-dimensional Hilbert space. 

The von Neumann entropy,
$\text{S}(\rho)=-\Trr{\rho \log_2 \rho}$,
satisfies the concavity property \cite{M.Ohya} 
\begin{align}
	\text{S}(\av{\rho})\geq \sum_{i}p_i\text{S}(\rho_i),
\end{align}
for a given ensemble of quantum states $\{p_i,\rho_i\}_i$, with the weighted average
$\av{\rho}=\sum_{i}p_i\rho_i$. This property gives rise to a suitable symmetric measure of distinguishability between the states composing the ensemble (according to the classical probability vector $\vect{p}=\{p_i\}_i$) called 
{\sl Holevo quantity} \cite{Holevo1973,Holevo2012} 
or  quantum Jensen-Shannon divergence
\cite{Majtey2005a,Lamberti2008,Briet2009,Virosztek2021,Sra2021},
\begin{align}
	\text{QJSD}_{\vect{p}}(\rho_1,\dots,\rho_n)=\text{S}(\av{\rho})-\sum_{i}p_i\text{S}(\rho_i).
\end{align}
Making use of the quantum relative entropy \cite{M.Ohya}
between two states $\rho$ and $\sigma$, 
\begin{align}\label{eq:quantumrelativeentropy}
	\text{S}_r(\rho||\sigma)=\Trr{\rho(\log_2\rho - \log_2 \sigma)},
\end{align}
the quantum divergence  can be recast in the form 
\begin{align}
	\text{QJSD}_{\vect{p}}(\rho_1,\dots,\rho_n)&=\sum_i p_i \Sr{\rho_i}{\av{\rho}}.
\end{align}
This equality allows us to interpret the quantity $\text{QJSD}_{\vect{p}}(\rho_1,\dots,\rho_n)$ as \textit{total divergence to the average} (or \textit{information radius}) quantifying how much information is discarded if we describe the system employing just the convex combination $\av{\rho}=\sum_i p_i \rho_i$. An analogous interpretation can be given in the classical setup \cite{Manning2011,Manning2002}.

In the case of a 
binary ensemble of states $\rho$ and $\sigma$ combined  with equal weights,
 we can employ a simplified notation, 
\begin{align}
	\QJSD(\rho,\sigma)&=\text{S}\!\left(\frac{\rho+\sigma}{2}\right)-\frac{1}{2}\text{S}(\rho)-\frac{1}{2}\text{S}(\sigma).
\end{align}

Regarding mathematical properties, the QJSD satisfies the
\textit{indiscernibles identity} \cite{Lamberti2008},
\begin{align}
	&0\leq \text{QJSD}(\rho,\sigma) \leq 1 \ \text{ with } \nonumber \\
	&\text{QJSD}(\rho,\sigma)=0 \  \iff  \ \rho=\sigma \nonumber \\
	&\textrm{QJSD}(\rho,\sigma)= 1 \ \iff  \ \text{supp}(\rho)\perp\text{supp}(\sigma),\label{eq:indisId}
\end{align}
where $\text{supp}(\rho)\perp\text{supp}(\sigma)$ denotes $\rho$ and $\sigma$ with orthogonal supports.

 The quantum relative entropy  satisfies the monotonicity \cite{M.Ohya}
with respect to  any  completely positive trace preserving  (CPTP) map $\Phi$.
This property, also called  
\textit{data processing inequality} \cite{Lamberti2008},
is thus inherited by the  quantum divergence,
\begin{equation}
\QJSD(\Phi\rho,\Phi\sigma) \; \leq \; \textrm{QJSD}(\rho,\sigma).
\label{eq:dpineq}
\end{equation}

 Furthermore, monotonicity implies that QJSD satisfies the restricted additivity,
\begin{align}\label{eq:restradd}
\textrm{QJSD}(\rho_1\otimes\sigma , \rho_2\otimes\sigma)=\textrm{QJSD}(\rho_1 , \rho_2),
\end{align}
and the invariance with respect to an arbitrary unitary transformation $U$
acting on both states,
$$\QJSD(U\rho U^\dagger,U\sigma U^ \dagger) = \textrm{QJSD}(\rho,\sigma).$$

\diego{In the single qubit case, $N=2$,
	Bri{\"e}t and Harremo{\"e}s showed}
 \cite{Briet2009} 
 that the square root of the $\QJSD$, known as the \textit{transmission distance},
 \begin{align}
 \label{eq:tranmissiondistance}
	d_{\text{t}}(\rho,\sigma)\ :=\ \sqrt{\QJSD(\rho,\sigma)},
\end{align}
satisfies the triangle inequality,
\begin{align}\label{eq:triangleineqstates}
	d_{\text{t}}(\rho,\sigma)\leq d_{\text{t}}(\rho,\chi) + d_{\text{t}}(\chi,\sigma).
\end{align}
for any $\rho,\sigma,\chi \in \mathcal{M}_2$. Recently, this result has been established
 for an arbitrary finite dimension $N$
 and extended to the cone of positive matrices \cite{Virosztek2021,Sra2021}.

The transmission distance can be bounded
by other known distance measures. 
For instance, the \textit{trace distance}  $\textrm{T}(\rho,\sigma)=\frac{1}{2}\Trr{\sqrt{(\rho-\sigma)^2}}$,
allows one to obtain the bounds 
\begin{align}\label{eq:IneqJS}
	\frac{\textrm{T}(\rho,\sigma)}{\sqrt{2\log 2}}\leq d_{\text{t}}(\rho,\sigma)\leq \sqrt{\textrm{T}(\rho,\sigma)},
\end{align}
valid for an arbitrary dimension $N$. The upper bound was derived
 in \cite{Briet2009}, while  the lower one
follows from inequalities \cite{Audenaert2014},
\begin{align*}
	2(1-\alpha)^2\textrm{T}(\rho,\sigma)^2\leq \Trr{\rho (\log \rho - \log \overline{\rho}_\alpha )},
\end{align*}
with $\overline{\rho}_\alpha=\alpha \rho +(1-\alpha)\sigma$ and $0<\alpha<1$. Inserting $\alpha=1/2$, one arrives at
\begin{align*}
\frac{\textrm{T}(\rho,\sigma)^2}{2\log 2}&\leq \text{S}_r(\rho||\frac{\rho+\sigma}{2}) \text{ and } \\
\frac{\textrm{T}(\sigma,\rho)^2}{2\log 2}&\leq \text{S}_r(\sigma||\frac{\rho+\sigma}{2}).
\end{align*}
The constant $\log 2$ appears above
as the quantum relative entropy \eqref{eq:quantumrelativeentropy}
is defined here with logarithm base two.
Therefore, we obtain
\begin{align*}
\frac{\textrm{T}(\rho,\sigma)^2}{2\log 2}&\leq \frac{1}{2}\text{S}_r(\rho||\frac{\rho+\sigma}{2})+\frac{1}{2}\text{S}_r(\sigma||\frac{\rho+\sigma}{2}),
\end{align*}
and by taking square root we
arrive at the lower bound in inequality \eqref{eq:IneqJS}.

A complementary upper bound for the transmission distance in terms of the square root of the quantum fidelity,
$F(\rho,\sigma)=(\Trr{\sqrt{\sqrt{\rho}\sigma\sqrt{\rho}}})^2$,
\begin{align}\label{eq:IneqJS2}
	\sqrt{\textrm{QJSD}(\rho,\sigma)}\leq D_E(\rho,\sigma), 
\end{align}
was established in \cite{Roga2010}.
The quantity $D_E$ is called the {\sl entropic distance} \cite{Lamberti2008},
\begin{align}\label{eq:boundK}
D_E(\rho,\sigma)=\sqrt{H_2\left\{\frac{1}{2}\left[1-\sqrt{F(\rho,\sigma)}\right]\right\}},
\end{align}
as it is a function of the binary entropy,
$H_2(x)=-x\log_2 x-(1-x)\log_2(1-x)$
for $x\in [0,1]$.


\section{Transmission distance between quantum channels and Jamio\l{}kowski isomorphism}\label{sec:transmissiondistancebetQC}

\noindent In the preceding section, we recalled the transmission distance in the space $\mathcal{M}_N$ of quantum states.	Let us introduce now a measure of distinguishability between completely positive trace-preserving maps, $\mathcal{E}:\mathcal{M}_N \to \mathcal{M}_N,$ by using the Choi-Jamio\l{}kowski isomorphism which establishes a one-to-one correspondence
between a quantum operation $\mathcal{E}$ and 
the corresponding bipartite quantum state $\rho_{\mathcal{E}}$ \cite{Watrous2018},
\begin{align}\label{eq:Choistate}
	\rho_{\mathcal{E}}=(\mathcal{E}\otimes\mathbbm{1})(\ket{\Phi}\bra{\Phi}).
\end{align}
Here 
\begin{align}\label{eq:Bellstate}
	\ket{\Phi}=\sum_i\frac{1}{\sqrt{N}} \ket{i}_{a}\ket{i}_{b},
\end{align}
denotes the maximally entangled, generalized Bell state,
represented in some orthonormal basis
$\{\ket{i}_x\}_i \in {\mathcal{H}}_N$ 
of the $N$-dimensional Hilbert space.
The bipartite state $\rho_{\mathcal{E}}$ is called the \textit{Choi state} of the map $\mathcal{E}$ and represents a mixed state in $\mathcal{M}_{N^2}$.
It emerges  by applying $\mathcal{E}$ to the 
principal system, maximally entangled with an 
ancilla of the same dimension $N$. 

Making use of this isomorphism, we apply
Eq. \eqref{eq:tranmissiondistance}
to define the \textit{transmission distance between channels} $\mathcal{E}$ and $\mathcal{F}$, 
	\begin{align}
	\label{eq:ChoiDisting}
	d_{\textrm{t}}^{\text{iso}}(\mathcal{E},\mathcal{F})\; := \; d_{\textrm{t}}(\rho_{\mathcal{E}},\rho_{\mathcal{F}}).
	\end{align}
	
Instead of QJSD we use its square root $d_{\textrm{t}}$
to assure that the triangle inequality is satisfied \cite{Virosztek2021}
and Eq. 	(\ref{eq:ChoiDisting})
can serve as a metric between quantum maps \cite{Gilchrist2005}.

\subsection{Properties of $d_{\textrm{t}}^{\text{iso}}(\mathcal{E},\mathcal{F})$}\label{sec:propChoiDist}
\noindent 
A list of required properties for a suitable measure of distinguishability between quantum maps was discussed in  \cite{Raginsky2001,Gilchrist2005,Puchaa2011}.
Let us now verify, which of them are satisfied by the distance
$d_{\textrm{t}}^{\text{iso}}(\mathcal{E},\mathcal{F})$.

Since the triangle inequality \eqref{eq:triangleineqstates}
is satisfied for the transmission distance in the state space,
the quantity $d_{\textrm{t}}^{\text{iso}}(\mathcal{E},\mathcal{F})$
is symmetric in its arguments, it satisfies the triangular inequality,
is non-negative and vanishes if and only if $\mathcal{E}=\mathcal{F}$). 
Hence $d_{\textrm{t}}^{\text{iso}}(\mathcal{E},\mathcal{F})$
forms a true distance in the space of quantum maps.

For this kind of measures one often requires 
their \textit{stability} with respect to the tensor product,
\begin{align}
d_\text{t}^{\text{iso}}(\mathcal{E}\otimes\mathbbm{1},\mathcal{F}\otimes\mathbbm{1})=d_\text{t}^{\text{iso}}(\mathcal{E},\mathcal{F}).
\end{align} 
This fact can be demonstrated employing the restricted additivity \eqref{eq:restradd}, and relation $\rho_{\mathcal{E}\otimes\mathbbm{1}}=\rho_{\mathcal{E}}\otimes\rho_\mathbbm{1}$, which yield
\begin{align*}
	d_{\textrm{t}}^{\text{iso}}(\mathcal{E}\otimes\mathbbm{1},\mathcal{F}\otimes\mathbbm{1})&=\sqrt{\textrm{QJSD}(\rho_\mathcal{E}\otimes\rho_\mathbbm{1},\rho_\mathcal{F}\otimes\rho_\mathbbm{1})}\\
	&=\sqrt{\textrm{QJSD}(\rho_\mathcal{E},\rho_\mathcal{F})}=d_{\textrm{t}}^{\text{iso}}(\mathcal{E},\mathcal{F}).
\end{align*}

Another property of \textit{chaining} is relevant to estimate errors in
protocols of quantum information processing. 
It is satisfied by a distance $d$ if for any four maps 
$\mathcal{E}_1,\mathcal{F}_1, \mathcal{E}_2,\mathcal{F}_2$
 the distance between their concatenations
can be bounded from above,
\begin{align}
\label{eq:chainingrule1}
	d(\mathcal{E}_2\circ  \mathcal{E}_1,\mathcal{F}_2\circ\mathcal{F}_1)
	\leq 
	 d(\mathcal{E}_1,\mathcal{F}_1)+d(\mathcal{E}_2,\mathcal{F}_2).
\end{align}	


In general, this property is not satisfied by
the distance 
$d_{\textrm{t}}^{\text{iso}}(\mathcal{E},\mathcal{F})$
defined \eqref{eq:ChoiDisting}
by the Jamio{\l}kowski isomorphism. 
%
To show a counterexample consider the following 
collection of four selected Choi states analyzed in \cite{Puchaa2011}, 
\begin{align}
	\rho_{\mathcal{E}_1}&=\frac{1}{2}\textrm{diag}(1,1,0,0) \\
	\rho_{\mathcal{E}_2}&=\frac{1}{2}\textrm{diag}(1,0,0,1) \\
	\rho_{\mathcal{F}_1}&=\rho_{\mathcal{E}_1} \\
	\rho_{\mathcal{F}_2}&= \frac{1}{2}\textrm{diag}(0,0,1,1). 
\end{align}

Hence $\rho_{\mathcal{E}_2\circ\mathcal{E}_1}=\rho_{\mathcal{E}_1}$ and $\rho_{\mathcal{F}_2\circ\mathcal{F}_1}=\rho_{\mathcal{F}_2}$, 
so the transmission distance between both composed maps reads,
$$d^{\text{iso}}_{t}(\mathcal{E}_2\circ  \mathcal{E}_1,\mathcal{F}_2\circ\mathcal{F}_1)=d^{\text{iso}}_t(\mathcal{E}_1,\mathcal{F}_2).$$
As the Choi states $\rho_{\mathcal{E}_1}$ and $\rho_{\mathcal{F}_2}$ have orthogonal supports,
the distance
$d^{\text{iso}}_t(\mathcal{E}_1,\mathcal{F}_2)=1$,
as it 
admits the maximal value of implied the identity of indiscernibles \eqref{eq:indisId}.
Since 
	$\rho_{\mathcal{F}_1}=\rho_{\mathcal{E}_1}$
one has
 $$d^{\text{iso}}_{t}(\mathcal{E}_1,\mathcal{F}_1)+
d^{\text{iso}}_{t}(\mathcal{E}_2,\mathcal{F}_2)=
d^{\text{iso}}_{t}(\mathcal{E}_2,\mathcal{F}_2).$$ 
Taking into account that $\rho_{\mathcal{E}_2}$ and $\rho_{\mathcal{F}_2}$ do not have orthogonal supports, \diego{we obtain the inequality,} 
\begin{align*}
    d^{\text{iso}}_{t}(\mathcal{E}_2\circ  \mathcal{E}_1,\mathcal{F}_2\circ\mathcal{F}_1)&>
d^{\text{iso}}_{t}
(\mathcal{E}_2,\mathcal{F}_2)=\\
&=d^{\text{iso}}_{t}(\mathcal{E}_1,\mathcal{F}_1)+d^{\text{iso}}_{t}(\mathcal{E}_2,\mathcal{F}_2),
\end{align*}
which provides a counterexample of inequality 
\eqref{eq:chainingrule1}.


However, the chaining property holds in a particular case,
if one of the maps applied first,  $\mathcal{E}_1$ or  $\mathcal{F}_1$,
is bistochastic: 
trace-preserving and unital.
 As a consequence of the monotonicity of the transmission distance and the triangle inequality, the chaining property holds  
 for a bistochastic argument,
  $\mathcal{F}_1=\mathcal{D}_{\text{bi}}$.
  To demonstrate  the desired inequality,
\begin{align}\label{eq:chainingrule}
	d_{\text{t}}^{\text{iso}}(\mathcal{E}_2\!\circ \! \mathcal{E}_1,\mathcal{F}_2 \! \circ \! \mathcal{D}_{\text{bi}})\leq d_{\text{t}}^{\text{iso}}(\mathcal{E}_1,\mathcal{D}_{\text{bi}})+d_{\text{t}}^{\text{iso}}(\mathcal{E}_2,\mathcal{F}_2),
\end{align}	
we follow directly the same steps as in Ref. \cite{Gilchrist2005}. 
By applying the triangle inequality, we have
\begin{align}\nonumber
    d_{\text{t}}^{\text{iso}}(\mathcal{E}_2\!\circ\!  \mathcal{E}_1,\mathcal{F}_2\!\circ\!\mathcal{D}_{\text{bi}})&
\leq d_{\text{t}}^{\text{iso}}(\mathcal{E}_2\!\circ\! \mathcal{E}_1 , \mathcal{E}_2\!\circ\! \mathcal{D}_{\text{bi}})\\&+d_{\text{t}}^{\text{iso}}(\mathcal{E}_2 \!\circ\! \mathcal{D}_\text{bi}, \mathcal{F}_2 \!\circ\! \mathcal{D}_\text{bi}).\label{eq:ChainingProve}
\end{align}
Note that for arbitrary operations $\mathcal{E}$ and $\mathcal{F}$ it holds $\rho_{\mathcal{E}\!\circ\!\mathcal{F}}=(\mathcal{F}^{\intercal}\otimes\mathcal{E})(\ket{\Phi}\bra{\Phi})$,
where $\mathcal{F}^{\intercal}$ denotes the adjoint quantum operation:
if  $\{F_i\}_i$ represents Kraus operators corresponding to the map $\mathcal{F}$, their adjoints, $\{F_i^\intercal\}_i$ 
determine $\mathcal{F}^{\intercal}$. 
 If $\mathcal{F}$ is a unital map,
 its adjoint $\mathcal{F}^\intercal$ is trace-preserving, 
 and thus,
\diego{\begin{align*}
	&d_{\text{t}}^{\text{iso}}(\mathcal{E}_2 \!\circ\! \mathcal{D}_\text{bi}, \mathcal{F}_2 \!\circ\! \mathcal{D}_\text{bi})=\\&d_{\text{t}}\Big[(\mathcal{D}_\text{bi}^{\intercal} \otimes \mathcal{E}_2)(\ket{\Phi}\bra{\Phi}),(\mathcal{D}_\text{bi}^{\intercal} \otimes \mathcal{E}_2)(\ket{\Phi}\bra{\Phi})\Big].
	\end{align*}}

\noindent Therefore, the right-hand side of Eq. \eqref{eq:ChainingProve} can be bounded by employing contractivity to both terms, leading to the desired result. 

The post-processing inequality \cite{Gilchrist2005,Raginsky2001} requires that 
	\begin{equation}
	\label{eq:postproc}
d_{\textrm{t}}^{\text{iso}}(\mathcal{R}\!\circ\! \mathcal{E},\mathcal{R}\!\circ\!\mathcal{F})
\; \leq \;
d_{\textrm{t}}^{\text{iso}}(\mathcal{E},\mathcal{F}),
\end{equation}
for arbitrary quantum maps $\mathcal{R}$, $\mathcal{E}$ and $\mathcal{F}$. 
The transmission distance $d^{\text{iso}}_{t}$
satisfies this property, as it follows
from the monotonicity of this distance.  
	
Inequality \eqref{eq:chainingrule} and post-processing inequality
\eqref{eq:postproc} allow us 
to demonstrate the  invariance with respect
to arbitrary unitary operations
$\mathcal{U}$ and $\mathcal{V}$,
	\begin{align}\label{eq:unitaryinv}
	d_{\textrm{t}}^{\text{iso}}(\mathcal{U}\!\circ\!  \mathcal{E} \!\circ\! \mathcal{V},\; \mathcal{U}\!\circ\!  \mathcal{F}\!\circ\! \mathcal{V})= d_{\textrm{t}}^{\text{iso}}(\mathcal{E},\mathcal{F}). 
\end{align}
 Note that $d_{\textrm{t}}$ is invariant under a post-transformation of $\mathcal{E}$ with $\mathcal{U}$, 
$$d_{\textrm{t}}^{\text{iso}}(\mathcal{U}\!\circ\!  \mathcal{E} ,\mathcal{U}\!\circ\!  \mathcal{F})= d_{\textrm{t}}^{\text{iso}}(\mathcal{E},\mathcal{F}),$$
because of the unitary invariance of the transmission distance in the state space. Thus, it remains to show the identity,
\begin{align}
\label{eq:unitary2}
d_{\textrm{t}}^{\text{iso}}(\mathcal{E} \!\circ\! \mathcal{V},\mathcal{F}\!\circ\! \mathcal{V})= d_{\textrm{t}}^{\text{iso}}(\mathcal{E},\mathcal{F}).
\end{align}
The chaining property in this case states that
$$d_{\textrm{t}}^{\text{iso}}(\mathcal{E} \!\circ\! \mathcal{V},\mathcal{F}\!\circ\! \mathcal{V}) \leq d_{\textrm{t}}^{\text{iso}}(\mathcal{E},\mathcal{F}).$$
Simultaneously it holds,
$$d_{\textrm{t}}^{\text{iso}}(\mathcal{E},\mathcal{F}) = d_{\textrm{t}}^{\text{iso}}(\mathcal{E}_{\mathcal{V}} \!\circ\! \mathcal{V}^{-1},\mathcal{F}_{\mathcal{V}}\!\circ\! \mathcal{V}^{-1}) \leq d_{\textrm{t}}^{\text{iso}}(\mathcal{E}_{\mathcal{V}},\mathcal{F}_{\mathcal{V}}),$$
where $\mathcal{E}_{\mathcal{V}}=\mathcal{E}\!\circ\! \mathcal{V}$ and $\mathcal{F}_{\mathcal{V}}=\mathcal{F}\!\circ\! \mathcal{V}$. Therefore, we conclude that
$$d_{\textrm{t}}^{\text{iso}}(\mathcal{E},\mathcal{F}) \leq d_{\textrm{t}}^{\text{iso}}(\mathcal{E}_{\mathcal{V}},\mathcal{F}_{\mathcal{V}})\leq d_{\textrm{t}}^{\text{iso}}(\mathcal{E},\mathcal{F}).$$
This implies Eq. \eqref{eq:unitary2}
and completes the proof of the unitary invariance \eqref{eq:unitaryinv}.

To establish bounds on the
analyzed transmission distance
$d^{\text{iso}}_{t}$
we shall apply the Jamio{\l}kowski
isomorphism to extend the standard distance
measures defined in the space of 
states into the space of maps \cite{Roga2011}.
The trace distance 	$T$,
fidelity $F$,  Bures distance $D_B$
and the entropic distance $D_E$
between any two maps read, respectively,
\begin{align}
	T(\mathcal{E},\mathcal{F})&=\textrm{T}(\rho_\mathcal{E},\rho_\mathcal{F}),
		\label{eq:tracedistanceChoi}\\
		F(\mathcal{E},\mathcal{F})&=F(\rho_\mathcal{E},\rho_\mathcal{F}), \label{eq:fidelitychoi}\\
	D_B(\mathcal{E},\mathcal{F})&=\sqrt{2-2\sqrt{F(\mathcal{E},\mathcal{F})}},\label{eq:buresdistanceChoi}\\
	D_E(\mathcal{E},\mathcal{F})&=\sqrt{H_2\left\{D_B^2(\mathcal{E},\mathcal{F})/4\right\}} \label{eq:KarolmeasureChoi}.
\end{align}


Making use of inequalities
\eqref{eq:IneqJS} and \eqref{eq:IneqJS2}
we arrive thus at the bounds
relating the transmission distance relates with other measures, 
\begin{small}
\begin{align}\label{eq:bounds}
	\frac{T(\mathcal{E},\mathcal{F})}{2\sqrt{2}}\leq d_{\textrm{t}}^{\text{iso}}(\mathcal{E},\mathcal{F})\leq \min \left\{\sqrt{ T(\mathcal{E},\mathcal{F})} \ , \ D_E(\mathcal{E},\mathcal{F})\right\}. 
\end{align}
\end{small}
Further discussion of the 
upper bound 
is provided in Appendix \ref{sec:appendixBounds}. 

\section{Entropic channel divergence\label{sec:entropicdisting}}

\noindent
 Let us now explore another approach to introduce
 a distinguishability measure into the space of maps
 by using the transmission distance.
 The quantum Jensen-Shannon divergence plays a key role in quantum information theory as the maximal amount of classical information transmissible by means of quantum ensembles \cite{Watrous2009}.
 For  a given  quantum channel $\mathcal{E}$ 
 one defines its Holevo capacity,
\begin{align*}
	C_1(\mathcal{E})=\max_{\Pi}\text{QJSD}_{\vect{p}}[\mathcal{E}(\rho_1),\dots,\mathcal{E}(\rho_n)],
\end{align*}
where the maximum is taken over all ensembles
 $\Pi=\{p_i,\rho_i\}_{i=1}^n$.

Consider now a different setup, in which 
a fixed state $\rho$ is transformed by channel 
$\mathcal{E}_i$ with probability $p_i$.
The associated Holevo information  \cite{Holevo1973} reads 
\begin{eqnarray}\label{eq:Holevorho}
\mathcal{X}(\rho)=\text{QJSD}_{\vect{p}}[\mathcal{E}_1(\rho),\dots,\mathcal{E}_n(\rho)]. 
\end{eqnarray}
Taking two analyzed channels 
$\mathcal{E}_1$ and $\mathcal{E}_2$
with equal weights, $p_1=p_2=1/2$, we arrive at a \textit{worst-case distance} measure 
between them,
\begin{align}\label{eq:entropicchanneldiv}
	d_{\textrm{t}}(\mathcal{E},\mathcal{F})=\sup_{\rho\in\mathcal{M}_N}\sqrt{\text{QJSD}[\mathcal{E}(\rho),\mathcal{F}(\rho)]}.
\end{align}
Without loss of generality 
the supremum can be restricted 
to pure states \cite{Wilde2020a}.

In the above definition one analyses 
directly the action of the channels 
$\mathcal{E}_i$ on the state $\rho$ of  size $N$.
A more general approach involves
extending the system by a $K$-dimensional ancilla
 \cite{Gilchrist2005,Leditzky2018a}
and studying the action of extended
channels, $\mathcal{E}_i \otimes \mathbbm{1}_K$.
The 
 \textit{entropic channel divergence} reads
\diego{\begin{align}\label{eq:entropicchanneldivstab}
	d_{\text{t}}^{K}\!(\mathcal{E},\mathcal{F})\!=\!
	\sup_{\sigma\in\mathcal{M}_{N K}}\!d_{\text{t}}[(\mathcal{E}\!\otimes\!\mathbbm{1}_K)(\sigma),(\mathcal{F}\!\otimes\!\mathbbm{1}_K)(\sigma)],
	\end{align}
}

\noindent where the state $\sigma$  acts on an extended space of size $NK$. Observe that in the special case $K=1$ one has $d_{\text{t}}^{K=1}(\mathcal{E},\mathcal{F})=d_{\textrm{t}}(\mathcal{E},\mathcal{F})$, as expected.

\subsection{Properties of $d_{\text{t}}^{K}(\mathcal{E},\mathcal{F})$ and  the chain rule}\label{sec:chainrule}

\noindent Let us discuss some key properties of the entropic channel divergence.  By definition,
for an arbitrary dimension $K$ of the ancilla, 
 the entropic channel divergence $d_{\text{t}}^{K}$  is symmetric, null if and only if the maps are equal, and satisfies the triangle inequality in the space of quantum channels. On the other hand, we have,
\begin{align}
&d_{\text{t}}^{K}[\mathcal{E}\otimes\mathbbm{1}_K,\mathcal{F}\otimes\mathbbm{1}_K]\geq d_{\text{t}}[(\mathcal{E}\otimes\mathbbm{1}_K)(\rho^*),(\mathcal{F}\otimes\mathbbm{1}_K)(\rho^*)] \nonumber \\
&\geq d_{\text{t}}[\text{Tr}_K \ (\mathcal{E}\otimes\mathbbm{1}_K)(\rho^*),\text{Tr}_K \ (\mathcal{F}\otimes\mathbbm{1}_K)(\rho^*)] \nonumber\\ \label{eq:nonstab}
&=d_{\text{t}}[\mathcal{E}(\rho_Q^*),\mathcal{F}(\rho_Q^*)]= d_{\text{t}}(\mathcal{E},\mathcal{F}),
\end{align}
where $\rho^*_Q$ denotes the state which maximizes $d_{\text{t}}(\mathcal{E},\mathcal{F})$, while
$\rho^*$ is any joint density matrix in $\mathcal{M}_{N  K}$ such that $\TrP{K}{\rho^*}=\rho^*_Q$.
In the same way, 
for any $K'$ being a multiple of $K$,
it is possible to show the following relation,
$$d_{\text{t}}^{K}(\mathcal{E},\mathcal{F})\leq d_{\text{t}}^{K'}(\mathcal{E},\mathcal{F}).$$
 This inequality  suggests that $d_{\text{t}}^{K}$ is in general 
not stable under the addition of an ancillary systems. 
Furthermore, it was shown in \cite{Aharonov1998} 
that if $K<N$ the channel divergence arising from the trace norm is in general not stable with respect to tensor product. To ensure stability one supplies the requirement 
that the size of the ancilla and the principal systems are equal, $K=N$.
It was demonstrated in \cite{Gilchrist2005}
that for $K\geq N$ the following equality holds: $$d_{\text{t}}^{K}(\mathcal{E},\mathcal{F})=d_{\text{t}}^{N}(\mathcal{E},\mathcal{F}).$$
 This implies that  for $K=N$
 the entropic channel divergence is stable under 
 the addition of auxiliary subsystems,	 $$d_{\text{t}}^{N}(\mathcal{E},\mathcal{F})=d_{\text{t}}^{N}(\mathcal{E}\otimes \mathbbm{1},\mathcal{F}\otimes \mathbbm{1}).$$  

As a result, it is natural to choose $K=N$
and in this work the quantity 
$d_{\text{t}}^{N}$ will be called
\textit{stabilized} entropic channel divergence.

The chaining property, post-processing inequality and unitary invariance can be straightforwardly demonstrated by using the monotonicity and triangle inequality of the transmission distance in the state space \cite{Gilchrist2005}.  

Once defined $d_{\text{t}}^{K}(\mathcal{E},\mathcal{F})$, we can establish a \textit{chain rule} for the entropic channel divergence, analogously to that obtained for the quantum relative entropy in Ref. \cite{Fang2020} -- this should not be confused with the chaining property discussed above. 
\begin{proposition}
Let $\mathcal{E}$ and $\mathcal{F}$ denote arbitrary two operations acting over $\mathcal{M}_N$. For arbitrary bi-partite quantum states $\rho$ and $\sigma$ in $\mathcal{M}_{N  K}$ the following chain rule holds, 
\begin{align}\label{eq:chainrule}
d_{\text{t}}\big[(\mathcal{E}\!\otimes\!\mathbbm{1}_K)\!(\rho),(\mathcal{F}\!\otimes\!\mathbbm{1}_K)\!(\sigma)\big]\leq d_{\textrm{t}}(\rho,\sigma)\!+\! d_{\text{t}}^{K}\!(\mathcal{E},\mathcal{F}). 
\end{align}
\end{proposition}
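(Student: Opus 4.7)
My plan is to prove the inequality by a single application of the triangle inequality for $d_{\text{t}}$ combined with monotonicity and the definition of the entropic channel divergence. Because the preceding section established that $d_{\text{t}}$ is a true metric on the state space $\mathcal{M}_{NK}$ (via Virosztek and Sra), and because the data processing inequality \eqref{eq:dpineq} lifts to its square root, both ingredients I need are already in hand.

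The first step is to insert the intermediate state $(\mathcal{E}\otimes\mathbbm{1}_K)(\sigma)$ between the two arguments on the left-hand side. The triangle inequality then gives
\begin{align*}
&d_{\text{t}}\big[(\mathcal{E}\otimes\mathbbm{1}_K)(\rho),(\mathcal{F}\otimes\mathbbm{1}_K)(\sigma)\big] \\
&\qquad\leq d_{\text{t}}\big[(\mathcal{E}\otimes\mathbbm{1}_K)(\rho),(\mathcal{E}\otimes\mathbbm{1}_K)(\sigma)\big] \\
&\qquad\quad + d_{\text{t}}\big[(\mathcal{E}\otimes\mathbbm{1}_K)(\sigma),(\mathcal{F}\otimes\mathbbm{1}_K)(\sigma)\big].
\end{align*}
(One could equally well route through $(\mathcal{F}\otimes\mathbbm{1}_K)(\rho)$; the choice is symmetric.)

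The second step is to bound each of the two pieces. For the first piece I invoke monotonicity of $d_{\text{t}}$ under the CPTP map $\mathcal{E}\otimes\mathbbm{1}_K$, which follows by taking square roots in \eqref{eq:dpineq}, to get $d_{\text{t}}\big[(\mathcal{E}\otimes\mathbbm{1}_K)(\rho),(\mathcal{E}\otimes\mathbbm{1}_K)(\sigma)\big] \leq d_{\text{t}}(\rho,\sigma)$. For the second piece I simply note that $\sigma$ is a particular element of $\mathcal{M}_{NK}$, so the definition \eqref{eq:entropicchanneldivstab} of $d_{\text{t}}^{K}$ as a supremum yields $d_{\text{t}}\big[(\mathcal{E}\otimes\mathbbm{1}_K)(\sigma),(\mathcal{F}\otimes\mathbbm{1}_K)(\sigma)\big] \leq d_{\text{t}}^{K}(\mathcal{E},\mathcal{F})$. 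Summing these two bounds produces exactly the right-hand side of \eqref{eq:chainrule}.

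Honestly, I do not anticipate any real obstacle: every tool is already catalogued in Secs. \ref{sec:prelim} and \ref{sec:entropicdisting}. The only subtle point worth a sentence in the write-up is to verify that monotonicity (which is stated in the excerpt at the level of $\QJSD$) passes to $d_{\text{t}} = \sqrt{\QJSD}$; this is immediate since $\sqrt{\cdot}$ is monotonic. Everything else is bookkeeping, and the argument works uniformly in $K$, so it specializes to the unextended case $K=1$ as a sanity check.
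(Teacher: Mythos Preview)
Your proof is correct and follows essentially the same approach as the paper: insert the intermediate state $(\mathcal{E}\otimes\mathbbm{1}_K)(\sigma)$ via the triangle inequality, bound the first term by monotonicity of $d_{\text{t}}$ under CPTP maps, and bound the second by the definition of $d_{\text{t}}^{K}$ as a supremum. The paper's proof is line-for-line the same, so there is nothing to add.
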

\noindent 

It relates the transmission distance $d_{\text{t}}(\cdot,\cdot)$
between quantum states, defined in \eqref{eq:tranmissiondistance}, and
the entropic channel divergence $d_{\text{t}}^{K}(\cdot,\cdot)$ 
introduced in Eq. \eqref{eq:entropicchanneldivstab}.

\begin{proof}
It will be convenient to use a simpler notation and write $\mathcal{E}_{NK}(\rho)$ instead of $(\mathcal{E}\otimes\mathbbm{1}_K)(\rho)$ for a quantum operation $\mathcal{E}$ acting on $\mathcal{M}_N$. 
Using this convention, we have,
	\begin{align}
	d_{\text{t}}[\mathcal{E}_{NK}(\rho),\mathcal{F}_{NK}(\sigma)]&\leq d_{\text{t}}[\mathcal{E}_{NK}(\rho),\mathcal{E}_{NK}(\sigma)]+\nonumber\\
	&+d_{\text{t}}[\mathcal{E}_{NK}(\sigma),\mathcal{F}_{NK}(\sigma)] \nonumber \\&\leq d_{\textrm{t}}(\rho,\sigma)
	+d_{\text{t}}[\mathcal{E}_{NK}(\sigma),\mathcal{F}_{NK}(\sigma)] \nonumber\\&\leq d_{\textrm{t}}(\rho,\sigma)+ d_{\text{t}}^{K}(\mathcal{E},\mathcal{F}),
	\end{align}
in which we have employed the triangle inequality and the monotonicity of the transmission distance.
\end{proof}

Note that  the chain rule \eqref{eq:chainrule} is valid not only for the stabilized version of the entropic channel divergence but also for the original version \eqref{eq:entropicchanneldiv} and the maps applied directly over the states describing the 
principal $N$-dimensional system.

The chain rule \eqref{eq:chainrule} has interesting applications in the context of hypothesis testing in quantum channel discrimination \cite{Fang2020}, due to its connection with the \textit{amortized channel divergence}, introduced  in \cite{Wilde2020a} for an arbitrary generalized divergence $d(\cdot,\cdot)$. By using the transmission distance, we obtain the amortized entropic divergence, 
\begin{small}
\begin{align}
\label{eq:amort}
d_{\text{t}}^A(\mathcal{E},\mathcal{F})=\sup_{\rho, \sigma \in \mathcal{M}_{NK}}\!\left\{ d_{\text{t}}[\mathcal{E}_{NK}(\rho),\mathcal{F}_{NK}(\sigma)]-d_{\textrm{t}}(\rho,\sigma) \right\},
\end{align}
\end{small}

\noindent which depends on the size $K$ of the ancilla. Note that the chain rule \eqref{eq:chainrule} 
establishes an upper bound for $d_{\text{t}}^A(\mathcal{E},\mathcal{F})$.
A lower bound,  $$d_{\text{t}}^A(\mathcal{E},\mathcal{F})\geq d_{\text{t}}^{K}(\mathcal{E},\mathcal{F}),$$
was shown  \cite{Wilde2020a} to hold 
for an arbitrary distance measures $d(\cdot,\cdot)$. 
We arrive therefore at the \textit{amortization collapse} of the entropic channel divergence, that is 
\begin{align}
\label{eq:amortcoll}
d_{\text{t}}^A(\mathcal{E},\mathcal{F})=d_{\text{t}}^{K}(\mathcal{E},\mathcal{F}).
\end{align}

\section{Physical interpretation}\label{sec:PhysInt}

\noindent We defined the transmission distance 
\eqref{eq:ChoiDisting} between quantum channels,
and the entropic channel divergence \eqref{eq:entropicchanneldivstab}
and will now discuss their physical meaning.

\subsection{Transmission distance between quantum channels}
\noindent The transmission distance between quantum channels is \textit{easy}  to compute, as its definition  does not require any optimization procedure. The calculations are reduced to 
evaluation of the entropy of a map \cite{Roga2011}, equal to the von Neumann entropy of the corresponding Choi states. 
Furthermore, it is possible to estimate experimentally this quantity,
since its definition involves the Choi states, which can be obtained 
by quantum process tomography \cite{Gilchrist2005}. 

Observe that $\left[d_{\textrm{t}}^{\text{iso}}(\mathcal{E},\mathcal{F})\right]^2$ is the Holevo information corresponding to an equiprobable ensemble composed by the states $\rho_\mathcal{E}$ and $\rho_\mathcal{F}$. Additionally, for general discrete ensembles, $\left[d_{\textrm{t}}^{\text{iso}}(\mathcal{E},\mathcal{F})\right]^2$ is connected to the protocol of \textit{dense coding}. 
Consider a bipartite quantum system in a maximally entangled
state, $\rho_{r}=|\Phi\rangle \langle \Phi|$, 
usually known as \textit{resource state}, subjected to local unitary transformations $\mathcal{U}_i$ performed with probability $p_i$. 
The output state
	\begin{align}
	\rho_i^U=(U_i\otimes \mathbbm{1})\rho_r(U_i^\dagger\otimes \mathbbm{1}),	
	\end{align}
occurs with probability $p_i$. This protocol, relying on the initial entanglement between both parties, 
allows them to transmit classical information encoded in a bipartite system, while conducting operations on a single subsystem only. 
If the dimension of each subsystem is $N$, it is possible to send $2\log_2 N$ bits of classical information,
even though the classical coding allows one to send only
$\log_2 N$ bits.

The capacity of the dense coding protocol with resource $\rho_r$
 to transmit classical information for fixed unitary operations $U_i$,
is given  \cite{Laurenza2020}  by the maximum over $\{p_i\}_i$ of $\text{QJSD}_{\vect{p}}(\rho^U_1,\dots,\rho^U_n)$.
Since $\rho_i^U$ form  Choi matrices of  unitary channels, $\mathcal{U}_i$, the divergence $\left[d_{\textrm{t}}^{\text{iso}}(\mathcal{U}_1,\mathcal{U}_2)\right]^2$, 
coincides with the capacity of the coding with equal 
probabilities of all unitary operations, $p_i=1/n$.

Therefore,  $\left[d_{\textrm{t}}^{\text{iso}}(\mathcal{E},\mathcal{F})\right]^2$ is the dense coding capacity connected to maps $\mathcal{E}$ and $\mathcal{F}$, for a noiseless protocol with a maximally entangled
resource state $\rho_r$. 
Distinguishability of quantum maps using quantum dense coding protocol
was advocated by Raginsky \cite{Raginsky2001}, who analyzed an analogous measure based on the quantum fidelity instead of the quantum Jensen-Shannon divergence.

\subsection{Entropic channel divergence}

\noindent Given a collection of quantum operations $\{\mathcal{E}_i\}$ with probabilities $\{p_i\}$, the quantity 
\begin{eqnarray*}
	\sup_{\rho\in\mathcal{M}_{N\times N}}\textrm{QJSD}_{\vect{p}}[(\mathcal{E}_1\otimes \mathbbm{1})(\rho),\dots,(\mathcal{E}_n\otimes\mathbbm{1})(\rho)]
\end{eqnarray*}
is called  the {\sl quantum reading capacity}, defined  
in a scheme of readout of quantum memories \cite{Pirandola2011}. 
This process corresponds to channel decoding when a decoder retrieves information in the cells of a memory. The entropic channel divergence is the square root of the previous quantity in the symmetric case, $p_i=1/n$.

The one-shot capacity of a dense coding protocol, with an arbitrary resource state $\rho_r$, can be rewritten in terms of the quantum reading capacity \cite{Laurenza2020}.

\subsection{Relation between the channel divergence
$d_\text{t}^N$ and the transmission  distance $d_\text{t}^\text{iso}$}
\label{sec:connection}

\noindent Assume that the single-qubit channels we wish to distinguish are covariant with respect to Pauli operators. This means that for each quantum channel $\mathcal{E}$ we can write $\mathcal{E}\!\circ\! \mathcal{P} = \mathcal{P}'\!\circ\! \mathcal{E}$,
where
$\mathcal{P}$ and $\mathcal{P}'$ denote Pauli channels. In this case, the channel can be simulated with LOCC operations \cite{Laurenza2020}, and it is called \textit{Choi-stretchable},
so that
\begin{align}\label{eq:paulicov}
	\mathcal{E}(\rho)=\mathcal{T}_{\text{tele}}(\rho \otimes \rho_{\mathcal{E}}).
\end{align}
Here $\mathcal{T}_{\text{tele}}$ denotes the standard quantum teleportation protocol and $\rho_\mathcal{E}$ stands for 
the corresponding Choi state of the map $\mathcal{E}$. Thus, for any two Choi-stretchable quantum operations $\mathcal{E}_1$ and $\mathcal{E}_2$, we have
\begin{align*}
	d_{\text{t}}^{N}(\mathcal{E}_1,\mathcal{E}_2) &=\sup_{\rho\in\mathcal{M}_{N \times N}}d_{\text{t}}[(\mathcal{E}_1\otimes\mathbbm{1})(\rho),(\mathcal{E}_2\otimes\mathbbm{1})(\rho)] \nonumber\\
	&=\sup_{\rho\in\mathcal{M}_{N \times N}} d_{\text{t}}[\mathcal{T}_{\text{tele}}(\rho \otimes \rho_{\mathcal{E}_1}),\mathcal{T}_{\text{tele}}(\rho \otimes \rho_{\mathcal{E}_2})] \nonumber \\
	&\leq \sup_{\rho\in\mathcal{M}_{N \times N}} d_{\text{t}}(\rho \otimes \rho_{\mathcal{E}_1},\rho \otimes \rho_{\mathcal{E}_2})=d_{\text{t}}^{\text{iso}}( \mathcal{E}_1, \mathcal{E}_2).
\end{align*}
We applied here the sub-additivity of the QJSD in the state space and its monotonicity under CP maps. By definition of $d_{\text{t}}^{N}$, 
inequality holds $d_{\text{t}}^{\text{iso}}( \mathcal{E}_1, \mathcal{E}_2)\leq d_{\text{t}}^{N}(\mathcal{E}_1,\mathcal{E}_2)$. 
Thus the equality
\begin{align}
d_{\text{t}}^{\text{iso}}( \mathcal{E}_1, \mathcal{E}_2)=d_{\text{t}}^{N}(\mathcal{E}_1,\mathcal{E}_2)
\end{align}
is valid for any two  Pauli covariant operations $\mathcal{E}_1$ and $\mathcal{E}_2$.

\section{Applications} \label{sec:applications}

\noindent In this section, we explore certain features of the distinguishability measures between quantum operations proposed in Sections \ref{sec:transmissiondistancebetQC} and \ref{sec:entropicdisting}. We analyze two particular
single-qubit problems: distinguishing two unitary Pauli operations 
and two Hamiltonian evolutions under decoherence. 

The three-dimensional Bloch vector $\vect{r}$ of a single-qubit state
allows us to represent the density matrix as
\begin{align}
	\rho=\frac{1}{2}\left(\mathbbm{1}+\vect{r}\cdot\vect{\sigma}\right).
\end{align}
Here $\vect{r}\cdot\vect{\sigma}=\sum_{i=1}^3 r_i \sigma_i$ with $\{\sigma_i\}_i$ denoting three Pauli matrices. The action of a quantum operation $\mathcal{E}$ over $\rho$ can be described by
a distortion matrix $\Lambda_{\mathcal{E}}$ and a translation vector $\vect{l}_{\mathcal{E}}$,
\begin{align}
	\mathcal{E}(\rho)&=\frac{1}{2}\left(\mathbbm{1}+\vect{r}_{\mathcal{E}}\cdot 
	\vect{\sigma}\right) \ \text{with} \nonumber \\
	\vect{r}_{\mathcal{E}}&=\Lambda_{\mathcal{E}} \vect{r}+\vect{l}_{\mathcal{E}}. \label{eq:affinedecomp}
\end{align}
The above form is called the affine decomposition or 
the Fano representation of the map.

\subsection{Pauli channels}\label{sec:PauliChann}

\noindent All single-qubit unital operations belong to the class
of Pauli channels,
\begin{align}
\label{eq:Paulimap}
	\mathcal{P}_p(\rho)=\sum_{\alpha=0}^3 p_\alpha \sigma_\alpha \rho \sigma_\alpha,
\end{align}
where $\{\sigma_\alpha\}_{\alpha=0}^3=\{\mathbbm{1},\vect{\sigma}\}$ and $\{p_\alpha\}_{\alpha=0}^3$ is a discrete probability vector.
The Fano form of such a map $\mathcal{P}$ reads,
\begin{align}
	\vect{l}_{\mathcal{P}}&=\vect{0},\nonumber\\
	\Lambda_{\mathcal{P}}&=\text{diag}(c_1,c_2,c_3)=\sum_{\alpha=0}^3 p_\alpha R_\alpha, \label{eq:affinepuali}
\end{align}
with
\begin{align}
	R_0&=\text{diag}(1,1,1), \nonumber\\
	R_1&=\text{diag}(1,-1,-1),\nonumber \\
	R_2&=\text{diag}(-1,1,-1), \nonumber \\
	R_3&=\text{diag}(-1,-1,1).
\end{align}
Thus, $R_\alpha$ is a diagonal orthogonal matrix defined by the action of the unitary transformations given by the Pauli matrix $\sigma_\alpha$ and $R_0$ is connected to the identity map. Additionally, the set $\vect{c}=(c_1,c_2,c_3)$, in Eq. \eqref{eq:affinepuali}, for which $\mathcal{P}$ is a well-defined CPTP map specifies a tetrahedron in the three-dimensional space 
\cite{Ruskai2002},
with edges $\{R_\alpha\}_{\alpha=0}^3$, see Fig. \ref{fig:QJSDBallChoiDepol}. The relation among $\{p_\alpha\}_{\alpha=0}^3$ and the numbers $\{c_i\}_{i=1}^3$ is
\begin{align}
	p_0&=\frac{1}{4}(1+c_1+c_2+c_3), \nonumber 
	\\
	p_1&=\frac{1}{4}(1+c_1-c_2-c_3), \nonumber \\
	p_2&=\frac{1}{4}(1-c_1+c_2-c_3), \nonumber \\
	p_3&=\frac{1}{4}(1-c_1-c_2+c_3). \label{eq:probEigenChoiBD4}
\end{align}
Particular examples of Pauli maps are the identity, the phase flip channel $\mathcal{P}_{pf}$ and the depolarizing map $\mathcal{D}$, 
corresponding to the distortion matrices
\begin{align}
\Lambda_{\mathcal{I}}&=\text{diag}(1,1,1),\label{eq:identitymap}\\
\Lambda_{\mathcal{P}_{pf}}&=\text{diag}(1-x,1-x,1), \label{eq:phaseflip}\\
\Lambda_{\mathcal{D}}&=\text{diag}(1-x,1-x,1-x),\label{eq:depol}
\end{align}
respectively. 
Completely depolarizing channel, $\mathcal{D}_0$,
corresponds to Eq. \eqref{eq:depol}
with $x=1$.


For an arbitrary channel $\mathcal{E}$, the distortion matrix $\Lambda_\mathcal{E}$, can be diagonalized by applying local unitary transformations on $\mathcal{E}(\rho)$, reaching  the canonical form of the map, which is subsequently given by the translation vector $\vect{t}_\mathcal{E}=(t_1,t_2,t_3)$ and the distortion vector $\vect{\omega}_\mathcal{E}=(\omega_1,\omega_2,\omega_3)$,
which results from the diagonalization of $\Lambda_\mathcal{E}$ \cite{Luo2008a,Bengtsson}. Note that the canonical form of a given unital map, $\vect{t}_\mathcal{E}=\vect{0}$, gives a Pauli channel
\eqref{eq:Paulimap}.

The Choi matrix \eqref{eq:Choistate} of any single qubit channel in its canonical form reads \cite{Bengtsson},
\begin{small}
    \begin{align*}
\!\!\rho_\mathcal{E}= \frac{1}{4} \!
	\begin{bmatrix}
		{1+\omega_3+t_3}  && 0 &&  {t_1+i\omega_2} && {\omega_1+\omega_2}\\
		0 && \!\!{1-\omega_3+t_3}&& {\omega_1-\omega_2} && 
		{t_1+i\omega_2} \\
		{t_1-i\omega_2} && {\omega_1-\omega_2} && \!\!{1-\omega_3-t_3} && 0\\ \omega_1+\omega_2 && t_1-i\omega_2 && 0 && \!\!{1+\omega_3-t_3} \\
	\end{bmatrix}.
\end{align*}
\end{small}

\noindent If  $\vect{t}_\mathcal{E}=\vect{0}$, $\rho_\mathcal{E}$ forms a Bell-diagonal state (i.e. its eigenvectors are the four Bell states) and its eigenvalues are given by the probabilities $p_\alpha$ appearing in 
\eqref{eq:probEigenChoiBD4}. Let us analyze the transmission distance between maps, Eq. \eqref{eq:ChoiDisting}, and the entropic channel divergence, Eq. \eqref{eq:entropicchanneldivstab}, for $K=1$ and $K=2$ (stabilized version).
\subsubsection{Transmission distance between Pauli Channels}

\begin{figure}
	\centering
	\includegraphics[width=.35\textheight]{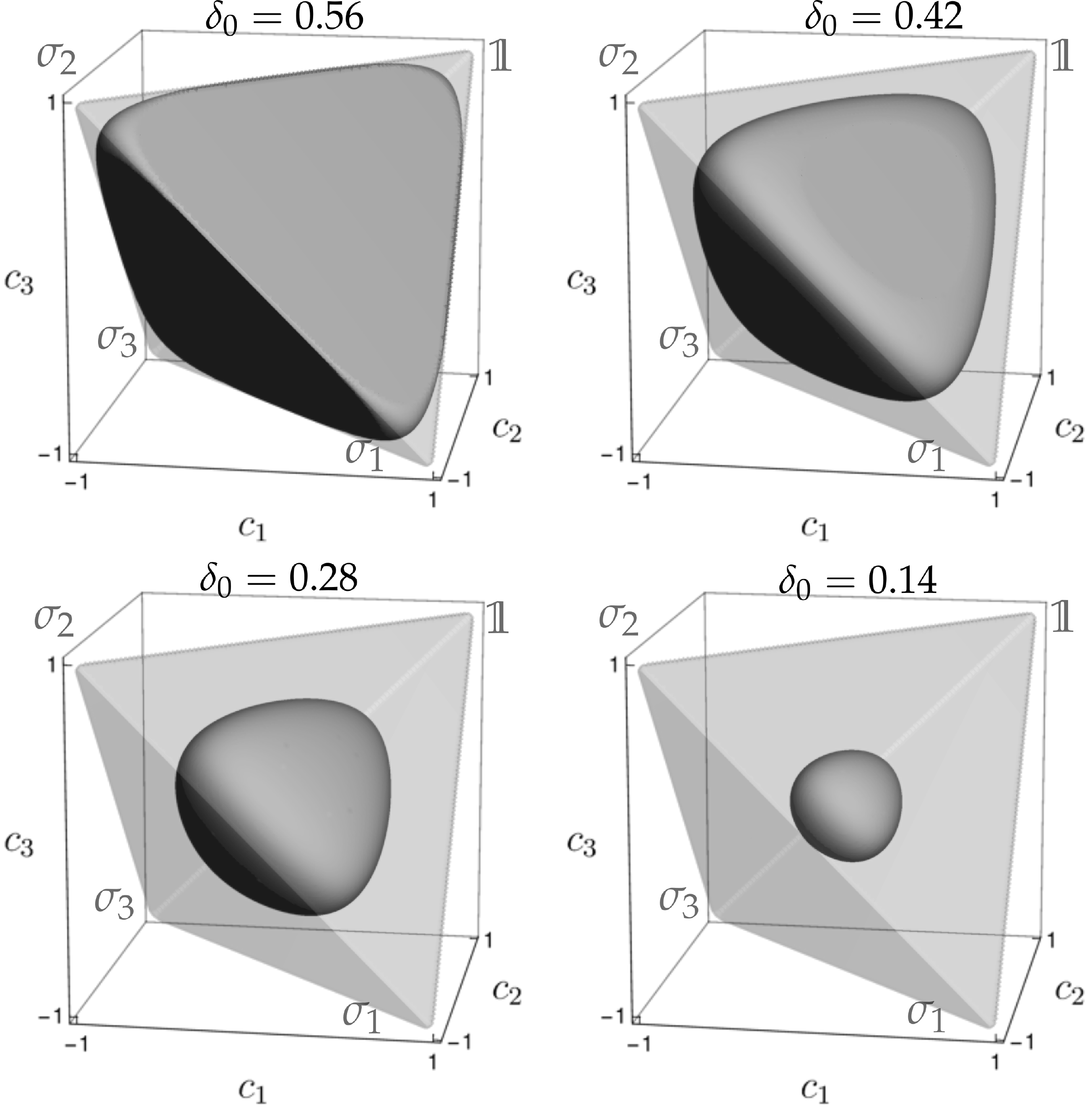}
	\caption{The tetrahedron of Pauli channels with 'spheres' of channels equidistant to the completely depolarizing channel $\mathcal{D}_0$ in the center of the tetrahedron,
	with respect to the distance $d_\text{t}^{\text{iso}}(\mathcal{P}_p, \mathcal{D}_0)=\delta_0$, for radii $\delta_0 \in \{0.56, 0.42,0.28,0.14\}$.}
	\label{fig:QJSDBallChoiDepol}
\end{figure}

\begin{figure}
	\centering
	\includegraphics[width=.35\textheight]{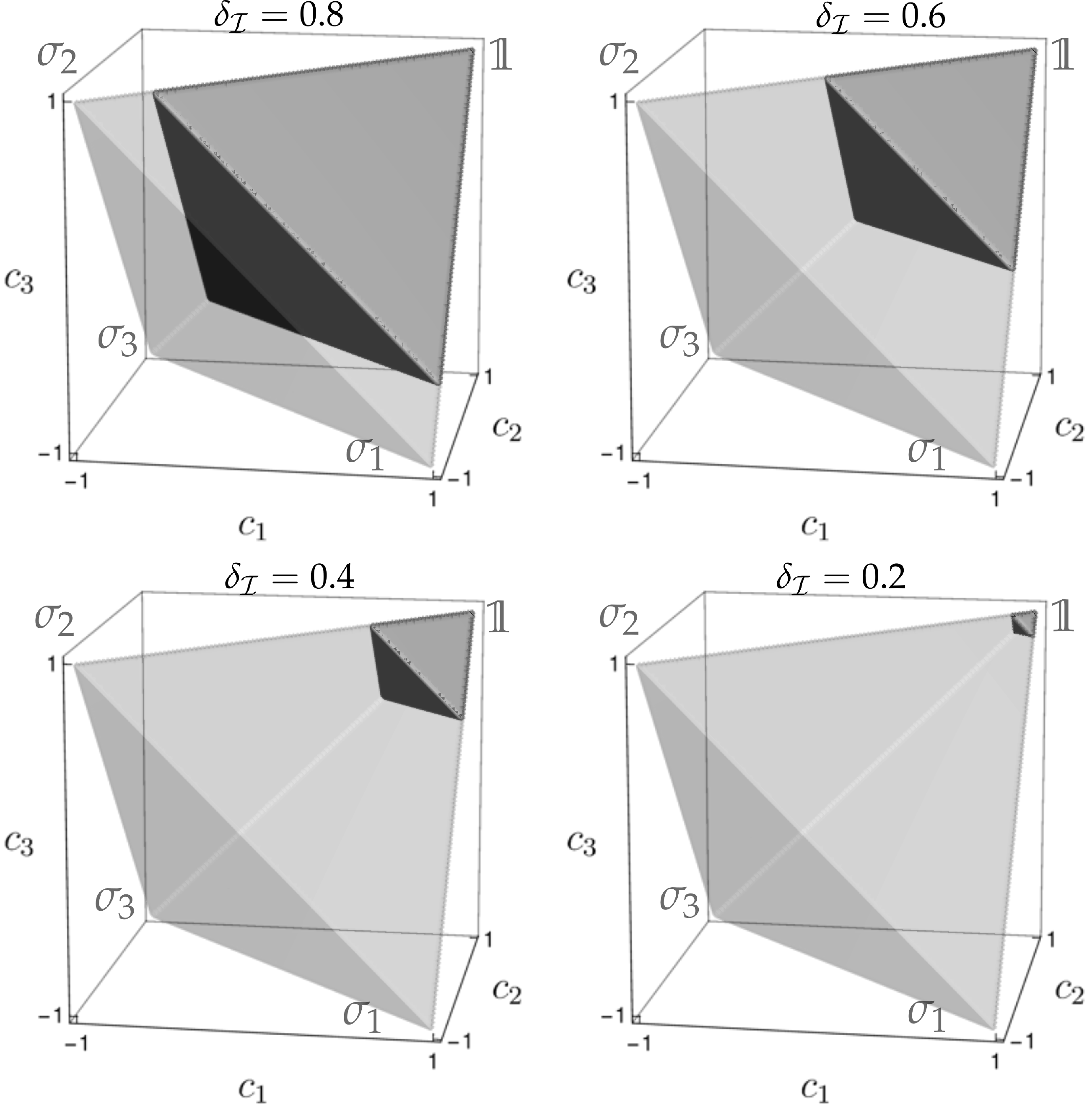}
	\caption{Surfaces within the Pauli tetrahedron, 
	 defined by a constant transmission distance to the identity map $\mathcal{I}$ represented by the corner of the set, 
	 $d_\text{t}^{\text{iso}}(\mathcal{P}_p, \mathcal{I})=\delta_\mathcal{I}$, for $\delta_\mathcal{I} \in \{0.8, 0.6,0.4,0.2\}$.}
	\label{fig:QJSDBallChoiIdent}
\end{figure}

\noindent Let $\mathcal{P}_p$ and $\mathcal{P}_q$ be two Pauli channels defined by two probability distributions $\{p_\alpha\}_{\alpha=0}^3$ and $\{q_\beta\}_{\beta=0}^3$, as in 
Eq.~\eqref{eq:Paulimap}. The corresponding Choi matrices of these maps become diagonal in the Bell basis. The quantum Jensen-Shannon divergence between two Pauli channels is therefore equal to the classical Jensen-Shannon divergence evaluated in classical tetrahedron of four-point probability distributions,
determined by the spectra of both Choi states, $p=\{p_\alpha\}_{\alpha=0}^3$ and $q=\{q_\beta\}_{\beta=0}^3$, 
\begin{align}
	d_\text{t}^{\text{iso}}(\mathcal{P}_p, \mathcal{P}_q)=\sqrt{\text{JSD}(\vect{p}||\vect{q})}.
\end{align}

Using the three-dimensional parameterization in  
\eqref{eq:probEigenChoiBD4}, we can plot the surface, within the Pauli tetrahedron, defined by those maps with the same transmission distance to the centre of the tetrahedron, which represents the completely depolarizing map $\mathcal{D}_0$,
\begin{align}
d_\text{t}^{\text{iso}}(\mathcal{P}_p, \mathcal{D}_0)=\delta_0.
\end{align}
In Fig. \ref{fig:QJSDBallChoiDepol}, such 'spheres' 
with respect to this distance are plotted for four different radii.
For a small radius $\delta_0$ such a surface
resembles a sphere, while for a larger values of 
$\delta_0$ it becomes deformed by the
faces of tetrahedron.

Analogously, Fig. 2 presents  four 'spheres' corresponding to the fixed transmission distance to the identity map, $d_\text{t}^{\text{iso}}(\mathcal{P}_p, \mathcal{I})=\delta_\mathcal{I}$, with radii $\delta_\mathcal{I}$ listed in the caption.

In Fig. \ref{fig:QJSDvsTRaceforchoiandCD}, we plot the transmission distance between the maps given by \eqref{eq:identitymap}-\eqref{eq:depol}, as functions of the depolarizing parameter $x\in [0,1]$, and the trace distance between the corresponding Choi states. For $x\not= 0$, we observe that
\begin{align}
	d_\text{t}^{\text{iso}}(\mathcal{P}_{pf}, \mathcal{I}) < d_\text{t}^{\text{iso}}(\mathcal{P}_{pf}, \mathcal{D}) < d_\text{t}^{\text{iso}}(\mathcal{I}, \mathcal{D}),
\end{align}
while for the trace distance  \eqref{eq:tracedistanceChoi} the following relations hold,
\begin{align}
T(\mathcal{P}_{pf}, \mathcal{D})= T(\mathcal{P}_{pf}, \mathcal{I}) < T(\mathcal{I}, \mathcal{D}).
\end{align}	

\begin{figure}
	\centering
	\includegraphics[width=.35\textheight]{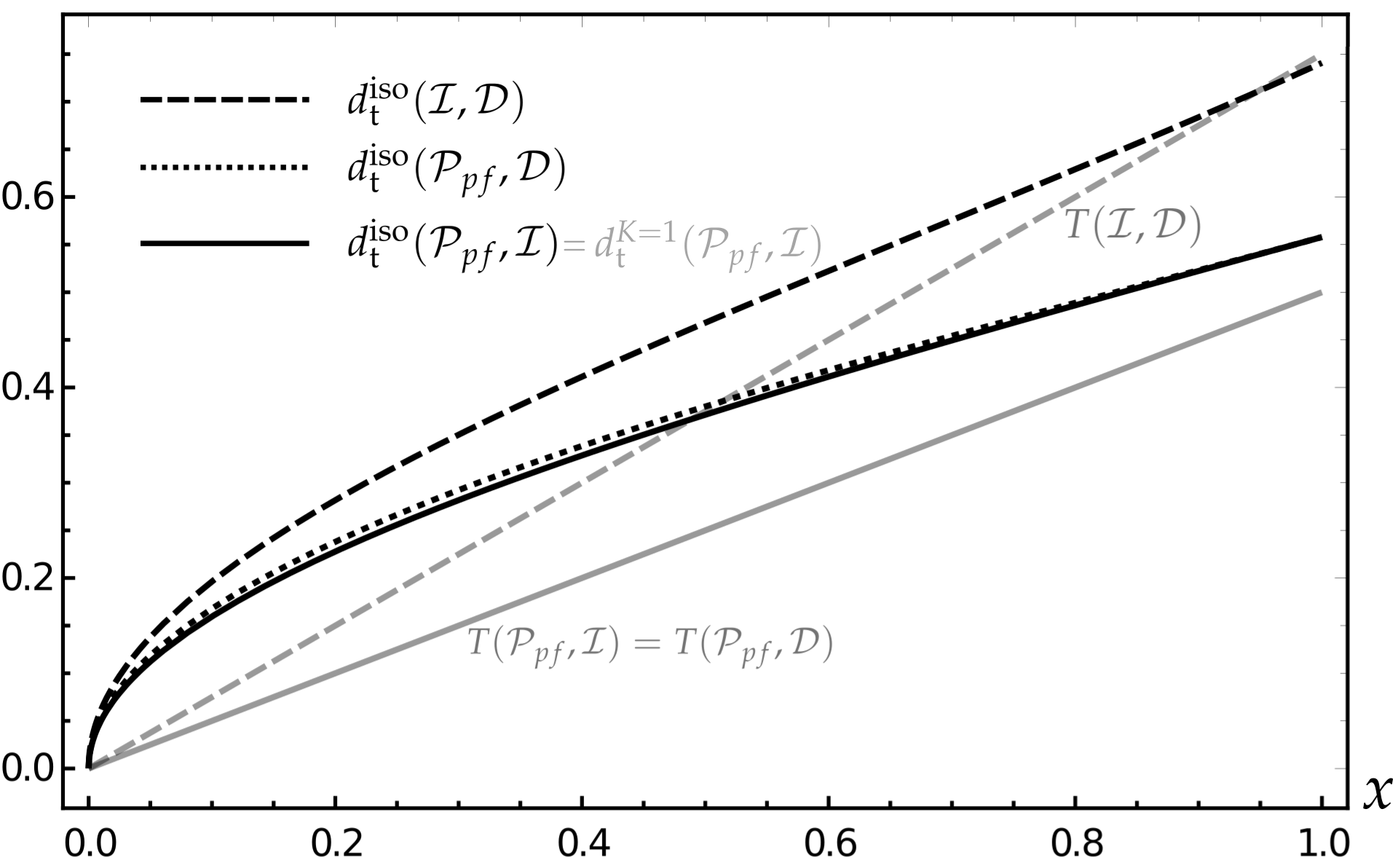}
	\caption{\textit{Phase-flip noise teleportation}: Transmission distance $d_{\text{t}}^{\text{iso}}$ defined in \eqref{eq:ChoiDisting} between the identity map, phase flip and depolarizing channel,  \eqref{eq:identitymap}-\eqref{eq:depol}, respectively, as functions of the depolarizing parameter $x$. 
	  For comparison we plot also 
	 the trace distance $T$ between 
	the corresponding Choi states,
	 see \eqref{eq:tracedistanceChoi},
	 and the entropic channel divergence $d_{\text{t}}^{K=1}$, see \eqref{eq:entropicchanneldiv}.}
	\label{fig:QJSDvsTRaceforchoiandCD}
\end{figure}

\subsubsection{Entropic channel divergence}

\noindent Let us calculate the entropic channel divergence \eqref{eq:entropicchanneldivstab} for two Pauli channels
$\mathcal{P}_p$ and $\mathcal{P}_q$ 
corresponding to probability distributions $p$ and $q$,
determined by the vectors 
$\vect{c}_p=(c_{p1},c_{p2},c_{p3})$ and $\vect{c}_q=(c_{q1},c_{q2},c_{q3})$, respectively. For $N=2$, there are  two different entropic divergence measures 
labeled by the dimension $K$ of the ancilla,
$$d_{\text{t}}^{K=1}(\mathcal{E},\mathcal{F})\text{ and }d_{\text{t}}^{K=2}(\mathcal{E},\mathcal{F}),$$ 
since  $d_{\text{t}}^{K'}(\mathcal{E},\mathcal{F})=d_{\text{t}}^{K=2}(\mathcal{E},\mathcal{F})$ for $K'>2$, as mentioned before. 
The Pauli channels are Pauli covariant \eqref{eq:paulicov}, 
which implies that $d_{\text{t}}^{K=2}(\mathcal{P}_p,\mathcal{P}_q)=d_{\text{t}}^{\text{iso}}(\mathcal{P}_p,\mathcal{P}_q)$,
see Sec. \ref{sec:connection}.

In the case $K=1$, one has to optimize
the transmission distance between the channels
over the initial pure states,
\begin{align*}
d^{K=1}_{\textrm{t}}(\mathcal{P}_p,\mathcal{P}_q)=\sup_{\rho\in\mathcal{M}_N}\sqrt{\text{QJSD}[\mathcal{P}_p(\rho),\mathcal{P}_q(\rho)]},
\end{align*}
where 
\begin{small}
	\begin{align*}
	\text{QJSD}\!\left[\mathcal{P}_p(\rho),\mathcal{P}_q(\rho)\right]=\text{S}[\overline{\mathcal{P}}(\rho)]-\frac{1}{2}\text{S}[\mathcal{P}_p(\rho)]-\frac{1}{2}\text{S}[\mathcal{P}_q(\rho)],
	\end{align*}
\end{small}

\noindent and
$\overline{\mathcal{P}}=(\mathcal{P}_p+\mathcal{P}_q)/{2}$ is the average channel,  which also forms a Pauli map. 

\begin{proposition}
	Entropic channel divergence \eqref{eq:entropicchanneldiv}, between two Pauli maps $\mathcal{P}_p$ and $\mathcal{P}_q$, given by distortion matrices $\Lambda_p=(c_{p1},c_{p2},c_{p3})$ and $\Lambda_q=(c_{q1},c_{q2},c_{q3})$, 
	takes the form,
	\begin{small}
		\begin{align}
		d_{\text{t}}^{K=1}\left(\mathcal{P}_p,\mathcal{P}_q\right)=\max_{i} \sqrt{f\!\left( \overline{c}_{i}^2 \right)-\frac{1}{2}\left[f\!\left(c_{pi}^2 \right)+f\!\left(c_{qi}^2 \right)\right]}, \label{eq:ECDmax}
		\end{align}
	\end{small}
	where $\overline{c}_i=(c_{pi}+c_{qi})/2$ and
	\begin{align}\label{eq:functionf}
	f\!(x):=H_2\left(\frac{1-\sqrt{x}}{2}\right). 
	\end{align}  	
	Here $H_2(x):=-x\log_2x -(1-x)\log_2 (1-x)$
	  stands for the binary entropy function for $x\in[0,1]$.
\end{proposition}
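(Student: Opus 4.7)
The plan is to reduce the supremum defining $d_{\text{t}}^{K=1}(\mathcal{P}_p,\mathcal{P}_q)$ to a maximization over three pure inputs aligned with the Bloch-sphere axes, and then read off the formula. Since the supremum in \eqref{eq:entropicchanneldiv} can be restricted to pure states, I parameterize them by unit Bloch vectors $\vec r$. By the Fano decomposition \eqref{eq:affinedecomp} together with $\vec l_{\mathcal{P}}=\vec 0$ and $\Lambda_{\mathcal{P}_p}=\mathrm{diag}(c_{p1},c_{p2},c_{p3})$ from \eqref{eq:affinepuali}, the output states $\mathcal{P}_p(\rho)$, $\mathcal{P}_q(\rho)$ and $\overline{\mathcal{P}}(\rho)$ are qubit states with Bloch vectors $\Lambda_p\vec r$, $\Lambda_q\vec r$ and $\bar\Lambda\vec r$ (where $\bar\Lambda=\mathrm{diag}(\bar c_1,\bar c_2,\bar c_3)$). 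Since the eigenvalues of a qubit state depend only on its Bloch-vector length, \eqref{eq:functionf} gives $\mathrm{S}[\mathcal{P}_p(\rho)]=f\!\bigl(\sum_i c_{pi}^2 r_i^2\bigr)$ and analogous expressions for $\mathcal{P}_q$ and $\overline{\mathcal{P}}$.

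Writing $s_i:=r_i^2$, which ranges over the probability simplex $\Delta=\{s\ge 0:\sum_i s_i=1\}$, the QJSD takes the form
$$
\Phi(s)= f\!\Big(\sum_i \bar c_i^2\, s_i\Big) - \tfrac{1}{2}f\!\Big(\sum_i c_{pi}^2\, s_i\Big) - \tfrac{1}{2}f\!\Big(\sum_i c_{qi}^2\, s_i\Big),
$$
so $[d^{K=1}_{\text{t}}(\mathcal{P}_p,\mathcal{P}_q)]^2=\sup_{s\in\Delta}\Phi(s)$. At a vertex $s=e_k$ the three arguments collapse to single coordinates and $\Phi(e_k)$ equals the expression under the maximum in \eqref{eq:ECDmax}, giving the trivial lower bound $\sup_\Delta\Phi\ge\max_k\Phi(e_k)$. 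The substantive content of the proposition is the matching upper bound, i.e.\ that the supremum is attained at some vertex.

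This vertex-optimality step is what I expect to be the main obstacle. The ingredients I would use are the concavity of $f$ on $[0,1]$, verified by writing $f=H_2\circ y$ with $y(x)=(1-\sqrt x)/2$ concave and $H_2$ concave and non-decreasing on the range $y([0,1])=[0,1/2]$, together with the elementary inequality $\bar c_i^2\le(c_{pi}^2+c_{qi}^2)/2$ implied by $\bar c_i=(c_{pi}+c_{qi})/2$. My strategy is a Lagrange/second-variation analysis on $\Delta$: at any interior stationary point $s^\star$ of a face the Lagrange condition couples several active indices via an equation of the form $2 f'(\bar\alpha^\star)\bar c_i^2 - f'(\beta^\star) c_{pi}^2 - f'(\gamma^\star) c_{qi}^2 = 2\mu$, and the concavity of $f$ combined with the inequality above is used to exhibit a tangent direction along which the second variation of $\Phi$ is strictly positive, ruling out an interior local maximum. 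Iterating this reasoning on successively lower-dimensional faces forces the maximum to a vertex, and taking the square root of $\max_k\Phi(e_k)$ yields \eqref{eq:ECDmax}.
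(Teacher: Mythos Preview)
Your setup is identical to the paper's: restrict to pure inputs, use the Bloch representation and the diagonal distortion matrices \eqref{eq:affinepuali} to write each output entropy as $f(\vec r\cdot\Lambda^2\vec r)$, and thus express the QJSD as in your $\Phi$. The divergence comes at the optimization step. The paper does not pass to the simplex variables $s_i=r_i^2$ nor invoke any second-variation or concavity argument; it simply applies Lagrange multipliers on the sphere $\vec r\cdot\vec r=1$ in the original coordinates, obtaining for each $i$ an equation of the form $\lambda r_i=A_i(\vec r)\,r_i$, which forces $r_i=0$ or $\lambda=A_i(\vec r)$. From this the paper reads off the six axis points $\pm\vec e_i$ as the extremals and takes the maximum of the three distinct values, yielding \eqref{eq:ECDmax} directly. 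Your route would, if it worked, supply the second-order justification the paper omits (why no interior critical point can be a maximum), but the paper does not attempt this.

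That said, your proposed upper-bound argument has two concrete problems. First, your justification of the concavity of $f$ is wrong: $y(x)=(1-\sqrt{x})/2$ has $y''(x)=1/(8x^{3/2})>0$, so $y$ is convex, not concave, and the ``concave composed with concave non-decreasing'' template does not apply. Second, even granting concavity of $f$, your sketched second-variation step is not substantiated: since $\Phi$ is a difference of concave-of-affine terms, its Hessian is indefinite in general, and neither the inequality $\bar c_i^2\le(c_{pi}^2+c_{qi}^2)/2$ nor the first-order Lagrange relation you wrote obviously produces a tangent direction of strictly positive second variation. If you want to close this gap, it is cleaner to stay on the sphere as the paper does: the factorized form of the stationarity equations already isolates the coordinate axes, and for generic parameters the bracket values $A_i$ are pairwise distinct, which is all one needs.
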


	\begin{figure}
	\centering
	\includegraphics[width=.35\textheight]{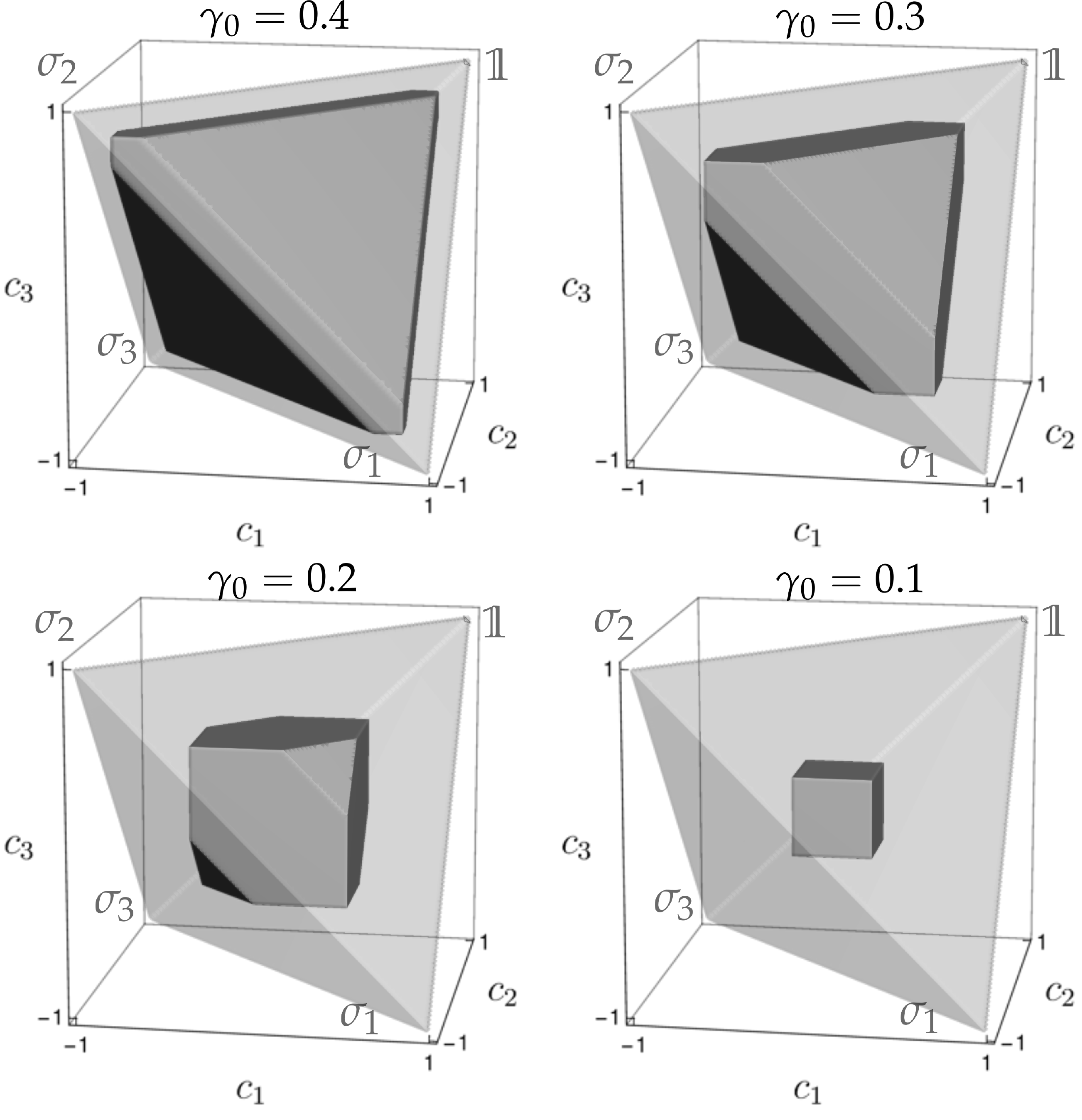}
	\caption{Spheres with respect to the distance
	$d_{\text{t}}^{K=1}$
	 within the tetrahedron of Pauli channels, $d_{\text{t}}^{K=1}\left(\mathcal{P}_p,\mathcal{D}\right)=\gamma_0$
		for four different radii: $\gamma_0 \in \{0.4, 0.3,0.2,0.1\}$.}
	\label{fig:QJSDBallECDDepol}
\end{figure}

\begin{figure}
	\centering
	\includegraphics[width=.35\textheight]{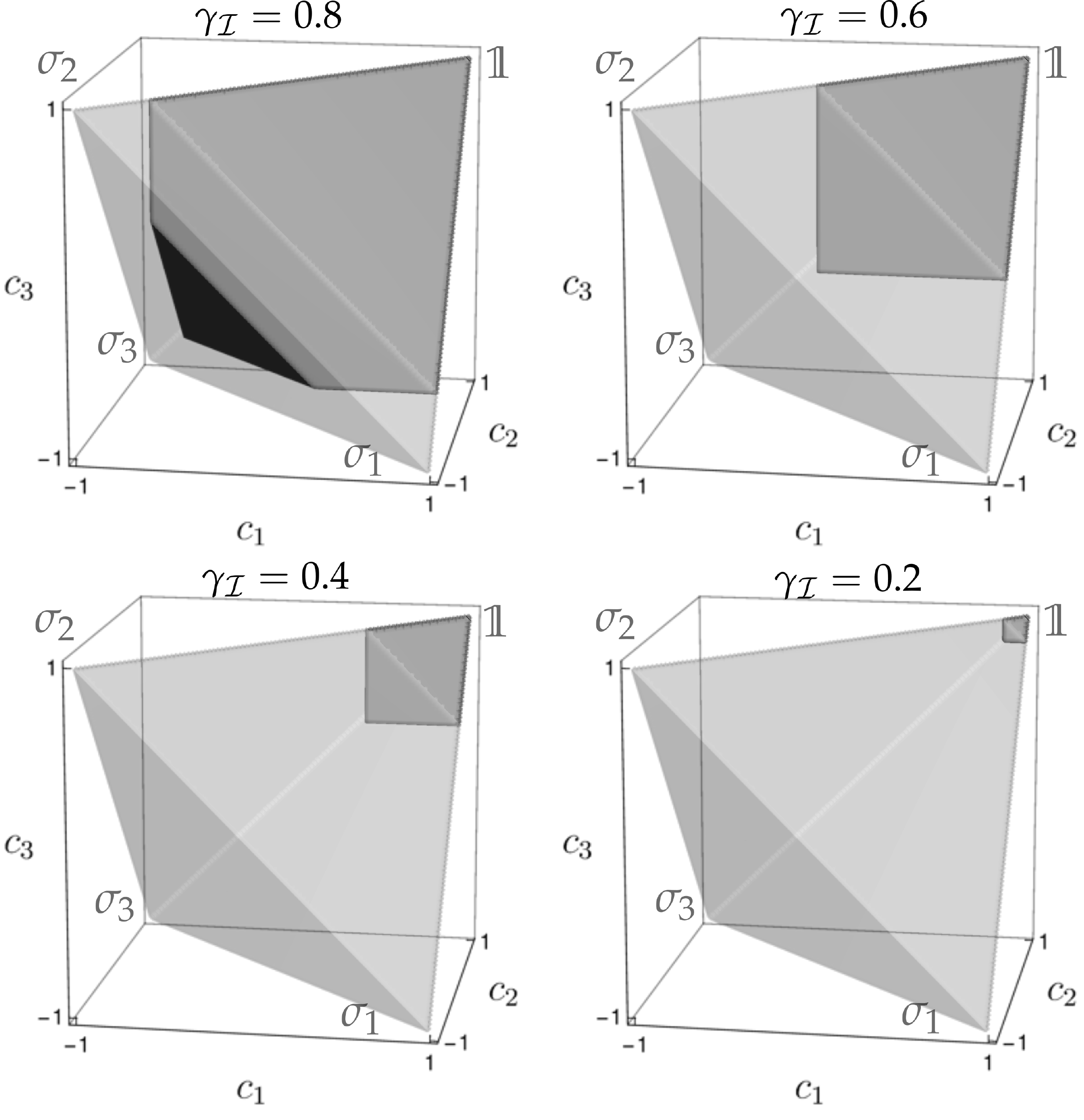}
	\caption{Surfaces defined by constant entropic channel divergence to the identity map: $d_{\text{t}}^{K=1}(\mathcal{P}_p, \mathcal{I})=\gamma_\mathcal{I}$ with $\gamma_\mathcal{I}\in \{0.8, 0.6,0.4,0.2\}$.}
	\label{fig:QJSDBallECDIdent}
\end{figure}

\begin{proof}
	For an arbitrary Pauli map $\text{S}[\mathcal{P}(\rho)]$, we have,
	\begin{align}
	\text{S}[\mathcal{P}(\rho)]=f\left(r_{\mathcal{P}}^2\right),
	\end{align}
	where $r_{\mathcal{P}}=\left|\vect{r}_{\mathcal{P}}\right|$ being $\vect{r}_{\mathcal{P}}=\Lambda_{\mathcal{P}}\vect{r}$ the Bloch vector of $\mathcal{P}(\rho)$, Eq. \eqref{eq:affinepuali}. Thus, we can write $\text{S}[\mathcal{P}(\rho)]=f\!(r_{\mathcal{P}}^2)=f\!(\vect{r}\cdot \Lambda_{\mathcal{P}}^2 \vect{r})$ and $\vect{r}^2=1$. Once we have rewritten the entropies of the Pauli channels, the quantum Jensen-Shannon divergence reads,
\begin{align}
		\text{QJSD}\!\left[\mathcal{P}_p(\rho),\mathcal{P}_q(\rho)\right]&=f\!\left(\vect{r}\cdot \Lambda_{\overline{\mathcal{P}}}^2 \vect{r} \right)-\frac{1}{2}f\!\left(\vect{r}\cdot \Lambda_{p}^2 \vect{r} \right)+\nonumber\\
		&-\frac{1}{2}f\!\left(\vect{r}\cdot \Lambda_{q}^2 \vect{r} \right).\label{eq:extremumQJSD}
\end{align}
Let us apply the method of Lagrange multipliers to the Cartesian coordinates of $\vect{r}$. This leads to the following three equations,
	\begin{small}
		\begin{align*}
		\lambda r_i=\left[f'(\vect{r}\cdot \Lambda_{\overline{\mathcal{P}}}^2 \vect{r}) \overline{c}_{i}^2-\frac{\left(f'(\vect{r}\cdot \Lambda_{p}^2 \vect{r})c_{pi}^2+f'(\vect{r}\cdot \Lambda_{q}^2 \vect{r})c_{qi}^2\right)}{2}\right]r_i,
		\end{align*}
	\end{small}

\noindent  with $i=1,2,3$,
which hold simultaneously with the constraint $\vect{r}^2=1$, associated to the Lagrange multiplier $\lambda$. Thus, 
the previous equation defines
six possible extreme values of the function \eqref{eq:extremumQJSD}
\begin{align}
	\vect{r}_{\text{opt},1}^\pm&=\pm(1, 0,0)=\pm\vect{r}_1 \label{eq:extre1}\\
	\vect{r}_{\text{opt},2}^\pm&=\pm(0,1,0)=\pm\vect{r}_2\\
	\vect{r}_{\text{opt},3}^\pm&=\pm(0, 0,1)=\pm\vect{r}_3.\label{eq:extre3}
\end{align}
As Eq. \eqref{eq:extremumQJSD} is symmetric under reflection, $\vect{r}'=-\vect{r}$, we have only three extremes that lead to different values of the QJSD. Correspondingly, the maximum is determined by Eq. \eqref{eq:ECDmax}.
\end{proof}

In Fig. \ref{fig:QJSDBallECDDepol}, we plot the three-dimensional 'spheres' within the tetrahedron of Pauli channels such that $$d_{\text{t}}^{K=1}\left(\mathcal{P}_p,\mathcal{D}_0\right)=\gamma_0$$
for four different radii.
Analogously, Fig. \ref{fig:QJSDBallECDIdent} shows surfaces of maps of the same entropic channel divergence  to the identity map,
$d_{\text{t}}^{K=1}(\mathcal{P}_p, \mathcal{I})=\gamma_\mathcal{I}$,
for four exemplary values of 
$\gamma_\mathcal{I}$. 

Consider now the distinguishability between the identity map, the phase flip and the depolarizing channel, specified in \eqref{eq:identitymap}-\eqref{eq:depol}, respectively. 
In this case, for any $x\in[0,1]$
the following inequalities hold,
\begin{align}
	d_\text{t}^{K=1}(\mathcal{P}_{pf}, \mathcal{I}) &= d_\text{t}^{K=1}(\mathcal{P}_{pf}, \mathcal{D})= d_\text{t}^{K=1}(\mathcal{I}, \mathcal{D})\nonumber\\
	&=d_\text{t}^{\text{iso}}(\mathcal{P}_{pf}, \mathcal{I}), \label{eq:entropicdistQT}
\end{align}
 Fig. \ref{fig:QJSDvsTRaceforchoiandCD} shows the dependence of this distance on the depolarizing parameter $x$.

A similar behaviour can be obtained for the distinguishability measures arising from the channel divergence based 
on the trace distance,
\begin{align}\label{eq:channeldivTr}
d_{\text{Tr}}^{K=1}(\mathcal{E},\mathcal{F})=\sup_{\rho\in\mathcal{M}_N}T[\mathcal{E}(\rho),\mathcal{F}(\rho)].
\end{align}

\begin{proposition}
Let $\mathcal{N}$ and $\mathcal{M}$ be two unital operations for $N=2$. Then,
		\begin{align}\label{eq:channeldivTrUnital}
d_{\text{Tr}}^{K=1}(\mathcal{N},\mathcal{M})=\frac{1}{2}\max_i \sqrt{\lambda_i^{\Delta}},
\end{align}
where $\{\lambda_i^{\Delta}\}_i$ is the set of eigenvalues of the matrix $$\Delta=(\Lambda_\mathcal{N}-\Lambda_\mathcal{M})^{\intercal}(\Lambda_\mathcal{N}-\Lambda_\mathcal{M}).	$$ 
\end{proposition}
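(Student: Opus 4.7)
The plan is to reduce the optimization to a standard operator-norm computation on $3\times 3$ real matrices, exploiting the fact that for a single qubit the trace distance between two states is exactly half the Euclidean distance between their Bloch vectors.

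First I would recall that for $\rho_1=\frac{1}{2}(\mathbbm{1}+\vect{r}_1\cdot\vect{\sigma})$ and $\rho_2=\frac{1}{2}(\mathbbm{1}+\vect{r}_2\cdot\vect{\sigma})$, the operator $\rho_1-\rho_2=\frac{1}{2}(\vect{r}_1-\vect{r}_2)\cdot\vect{\sigma}$ has eigenvalues $\pm\frac{1}{2}|\vect{r}_1-\vect{r}_2|$, and hence $T(\rho_1,\rho_2)=\frac{1}{2}|\vect{r}_1-\vect{r}_2|$. Using the Fano decomposition \eqref{eq:affinedecomp} with $\vect{l}_{\mathcal{N}}=\vect{l}_{\mathcal{M}}=\vect{0}$ (by unitality), an input state with Bloch vector $\vect{r}$ yields
\begin{equation*}
T[\mathcal{N}(\rho),\mathcal{M}(\rho)]=\tfrac{1}{2}\bigl|(\Lambda_\mathcal{N}-\Lambda_\mathcal{M})\vect{r}\bigr|.
\end{equation*}

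Next I would argue that the supremum in \eqref{eq:channeldivTr} can be restricted to pure states, i.e.\ to $|\vect{r}|=1$. This follows from joint convexity of the trace distance (or directly from the fact that the right-hand side is a linear function of $\vect{r}$ composed with a norm, hence convex in $\vect{r}$, so the maximum over the Bloch ball is attained on its boundary). Consequently,
\begin{equation*}
d_{\text{Tr}}^{K=1}(\mathcal{N},\mathcal{M})=\tfrac{1}{2}\sup_{|\vect{r}|=1}\bigl|(\Lambda_\mathcal{N}-\Lambda_\mathcal{M})\vect{r}\bigr|=\tfrac{1}{2}\|\Lambda_\mathcal{N}-\Lambda_\mathcal{M}\|_{\mathrm{op}}.
\end{equation*}

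Finally, I would invoke the standard identity that the operator norm of a real matrix $A$ equals its largest singular value, i.e.\ $\|A\|_{\mathrm{op}}^{2}$ equals the largest eigenvalue of $A^{\intercal}A$. Applied to $A=\Lambda_\mathcal{N}-\Lambda_\mathcal{M}$, this immediately gives $\|\Lambda_\mathcal{N}-\Lambda_\mathcal{M}\|_{\mathrm{op}}=\max_i\sqrt{\lambda_i^{\Delta}}$ with $\Delta=A^{\intercal}A$, yielding \eqref{eq:channeldivTrUnital}.

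I do not expect any serious obstacle in this proof. The only subtle point is the reduction to pure states, which should be stated explicitly but is standard (it is also mentioned in the paper right after \eqref{eq:entropicchanneldiv} and applies here for the same convexity reason). A variational characterization making the argument completely self-contained would be: since $\Lambda_\mathcal{N}-\Lambda_\mathcal{M}$ is real, an optimal $\vect{r}$ is a unit eigenvector of $\Delta$ associated with its largest eigenvalue $\lambda_{\max}^{\Delta}$, which corresponds to a legitimate pure qubit state and realizes the claimed bound.
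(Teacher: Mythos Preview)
Your proof is correct and follows essentially the same route as the paper: both reduce the trace distance to $\tfrac{1}{2}|(\Lambda_\mathcal{N}-\Lambda_\mathcal{M})\vect{r}|$ via the Bloch representation and unitality, restrict to the unit sphere, and identify the maximum with the largest eigenvalue of $\Delta$. The only cosmetic difference is that you invoke the operator-norm/singular-value characterization directly, whereas the paper spells out the spectral-decomposition argument for $\vect{r}\cdot\Delta\vect{r}$ explicitly.
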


A proof of this result is provided 
in Appendix \ref{sec:tracedistanceappendix}. 
The reasoning presented above implies that,
\begin{align}
	d_\text{Tr}^{K=1}(\mathcal{P}_{pf}, \mathcal{I}) &= d_\text{Tr}^{K=1}(\mathcal{P}_{pf}, \mathcal{D})= d_\text{Tr}^{K=1}(\mathcal{I}, \mathcal{D})\nonumber\\
	&=T(\mathcal{P}_{pf}, \mathcal{I}).\label{eq:tracedistanceQT}
\end{align}
Dependence of this function on the depolarizing parameter $x$ is also marked in Fig. \ref{fig:QJSDvsTRaceforchoiandCD}.

\subsubsection{Noise in quantum teleportation protocol}

\noindent Quantum teleportation, 
one of the most important quantum information protocols, replicates the state of one quantum system into another without having information about the input state. 
This protocol requires three qubits which are operated by two different entities, usually referred to as Alice and Bob. 

The corresponding tasks to teleport the qubit state $\rho_a$ of Alice to Bob, assuming they share a two-qubit state $AB$ in the maximally entangled Bell state $\ket{\Psi}_{AB}$, are:

1) Alice measures a projection onto the Bell basis for the qubits $aA$ and classically communicates its outcome to Bob, 

2) Bob applies suitable unitary operations, according to the shared measurement result, on his qubit $B$,
to replicate the initial input state  $\rho_a$ of Alice.

Such a teleportation protocol is called \textit{perfect}
and it can be described by the identity channel,
$\mathcal{I}_{a\to B}$, with distortion matrix given by  \eqref{eq:identitymap}, where the subindex $a \to B$ denotes that the channel takes states of qubit $a$ and returns the states of qubit $B$. However, the maximally entangled state $\ket{\Psi}_{AB}$, pre-shared by Alice and Bob, can be affected by noise or decoherence. The \textit{standard} teleportation protocol consists of the above steps, but instead assuming pre-shared maximally entanglement between the qubits $AB$,
one replaces it by a \textit{resource state}, 
$$\ket{\Psi}\bra{\Psi} \text{ $\to$ } \rho_{AB}.$$
If $\ket{\Psi}\bra{\Psi}$ is affected by decoherence, the resulting resource state  $\rho_{AB}$ becomes
a Werner state with the
decoherence parameter $x$,
$$\rho_{AB}=(1-x)\ket{\Psi}\bra{\Psi}+x\frac{\mathbbm{1}_A\otimes\mathbbm{1}_B}{4}.$$
Therefore, this protocol is described by a depolarizing channel $\mathcal{D}_{a \to B}$ with distortion matrix equal to $\Lambda_{\mathcal{D}}$, Eq. \eqref{eq:depol}. Moreover, for an arbitrary resource $\rho_{AB}$, the standard  teleportation protocol can always be written as a Pauli channel $\mathcal{P}_{a \to B}$, Eq. \eqref{eq:Paulimap}. Another type of decoherence on $\ket{\Psi}\bra{\Psi}$ leads
to a teleportation channel described by the phase-flip channel, with distortion matrix given by \eqref{eq:phaseflip}. This 
 protocol will be called \textit{phase-flip noise} teleportation. Hence,  Eq. \eqref{eq:identitymap} describes the perfect teleportation protocol, while Eqs. \eqref{eq:phaseflip} and \eqref{eq:depol} are two different teleportation protocols that consider noise or decoherence affecting their resource state. 

 Fig. \ref{fig:QJSDvsTRaceforchoiandCD} shows that 
 for any decoherence parameter $x$
 the transmission distance $d_{\text{t}}^{\text{iso}}$
 between the perfect and the standard teleportation protocols with a Werner state as a resource, is greater than the
 distances to the phase-flip noise teleportation.
  

An analogous property holds also for the trace distance. 
In the case of the entropic channel divergence for $K=1$, the distance between the three different channels is equal, see Eq. \eqref{eq:entropicdistQT}, similar to the case of the trace distance, Eq. \eqref{eq:tracedistanceQT}.

The surfaces in Fig. \ref{fig:QJSDBallChoiIdent} and \ref{fig:QJSDBallECDIdent} can be interpreted now as
the standard teleportation protocols equally distant to the perfect one, represented by the vertex $c_1=c_2=c_3=1$.
The transmission distance
$d_\text{t}^{\text{iso}}$ 
between quantum channels is more restrictive regarding the values of the parameters $c_i$, than the entropic channel divergence and $d_\text{t}^{K=1}$,
which allows lower values for $c_i$.

\subsection{Distinguishing operations determined by Hamiltonians}\label{sec:applicationsHamil}

\noindent Several applications of quantum information theory involve the problem of distinguishing a particular Hamiltonian from a given set. For instance, to determine errors 
which occur by a real-life realisations of certain information
processing tasks. Other examples include identification
of a classical static force acting on a given quantum system \cite{Childs2000,Preskill2000}. Consider the distinguishability between two Hamiltonians $H_1$ and $H_2$, acting on a two-dimensional Hilbert space. 

Since three Pauli matrices, extended by the idenity matrix, 
 $\{\mathbbm{1},\vect{\sigma}\}$,
 form a Hilbert-Schmidt basis in the space of 
 Hermitian matrices of order two,
  any single-qubit Hamiltonian
 can represented by its Bloch vector,
\begin{align}\label{eq:defhm}
	H_m=h_m^0\mathbbm{1} + \vect{h}_m\cdot\vect{\sigma}.
\end{align}
The noiseless evolution of the state generated by a given Hamiltonian can be described by a unitary transformation, $\mathcal{U}_m(\rho)=U_m \rho U_m ^\intercal$, with
\begin{align}\label{eq:unitaryevol}
U_m=e^{-itH_m}=e^{-ith_m^0}\left(\cos t\mathbbm{1}  -i \sin t \vect{h}_m\cdot\vect{\sigma}\right)
\end{align}
where $\vect{h}_m\cdot\vect{h}_m=1$.

Making use of the Bloch form \eqref{eq:affinedecomp} of the unitary operation $\mathcal{U}_m$ we find the distortion matrix for both channels, 
\begin{align}\label{eq:affinedistinguishing}
	\Lambda_m&=\cos 2t (\mathbbm{1}-\vect{h}_m\vect{h}_m^\intercal)+\sin 2t [\vect{h}_m] + \vect{h}_m\vect{h}_m^\intercal=\nonumber\\
	&=e^{2t[\vect{h}_m]},
\end{align}
with $m=1,2$.
The symbol $[\vect{h}_m]$ denotes
the skew-symmetric matrix defined by $[\vect{h}_m] \vect{r} = \vect{h}_m\times \vect{r}$. This is evidently an unital operation and therefore its translation vector vanishes, $\vect{l}_m=\vect{0}$. 

To make the model more realistic assume that a single qubit, controlled by a Hamiltonian $H_m$, suffers decoherence induced by the \textit{depolarizing  channel.} The evolution of the system is governed by the master equation,
\begin{align}\label{eq:mastereq}
	\frac{\text{d}\rho}{\text{d}t}  = -i [H_m,\rho] - \Gamma (\rho -\frac{1}{2}\mathbbm{1}),
\end{align}
with  the damping rate $\Gamma$.
Adopting the convention $\hbar=1$
we assure that in these units
the frequency is equal to one.

Any Bloch vector $\vect{h}_m$ determines, through Eq. \eqref{eq:defhm}, the Hamiltonian $H_m$. Hence the master equation \eqref{eq:mastereq} leads to the following dynamics of the Bloch vector $\vect{r}$,
\begin{align}
	\frac{\text{d}\vect{r}}{\text{d}t}= 2 (\vect{h}_m\times \vect{r}) -\Gamma \vect{r}, 
\end{align}
where 
$\rho=\frac{1}{2}(\mathbbm{1}+\vect{r}\cdot\vect{\sigma})$.

Solving this equation, we arrive at the time dependence,
\begin{align}
	\vect{r}(t)=e^{-\Gamma t} e^{2t[\vect{h}_m]}\vect{r}_0.
\end{align}
The map $\mathcal{E}^{\text{dec}}_m$ can be written as a concatenation of a unitary dynamics and a depolarizing channel,  $\mathcal{E}_m^{\text{dec}}=\mathcal{D}\!\circ\!\mathcal{U}_m$, with the distortion matrix
\begin{align}\label{eq:affinedistinguishingdeco}
\Lambda_m^{\text{dec}}=e^{-\Gamma t} e^{2t[\vect{h}_m]}.
\end{align}
In Fig. \ref{fig:DecoherenceAndRotations}, we show
the resulting trajectories from these kinds of channels. We have fixed the initial Bloch vector, $\vect{r}_0=\frac{1}{\sqrt{3}}(1,1,1)^\intercal$ 
and evolved it by two Hamiltonians 
corresponding to $\vect{h}_1=(0,0,1)^\intercal$ and $\vect{h}_2=(1,0,0)^\intercal$. 
Note, how the combined channel (unitary transformation and depolarizing channel) becomes less distinguishable as the decoherence parameter $\Gamma$ increases.

Observe that a rotation of the vector $\vect{h}_m$ generates a particular transformation on the distortion matrix $\Lambda_m^{\text{dec}}$. Eq. \eqref{eq:affinedistinguishing} implies that $\tilde{\Lambda}_m^{\text{dec}}=R \Lambda_m^{\text{dec}} R^\intercal$ if $\vect{h}'_m=R \vect{h}_m$ with $R$ being an orthogonal matrix and $\tilde{\Lambda}_m^{\text{dec}}$ specified by $\vect{h}'_m$.

Assume that we need to distinguish between two Hamiltonians, $H_1$ and $H_2$, related to vectors $\vect{h}_1$ and $\vect{h}_2$, respectively. The evolved state of the system will depend on time and on the damping parameter $\Gamma$. A fundamental problem in quantum information is managing the decoherence effects while keeping measurement precision. Our aim is to find the optimal evolution time allowing one for the best distinguishability between both Hamiltonians in view of the transmission distance between the channels and the measures proposed in \cite{Raginsky2001,Childs2000}.
\begin{figure}
	\centering
	\includegraphics[width=.37\textheight]{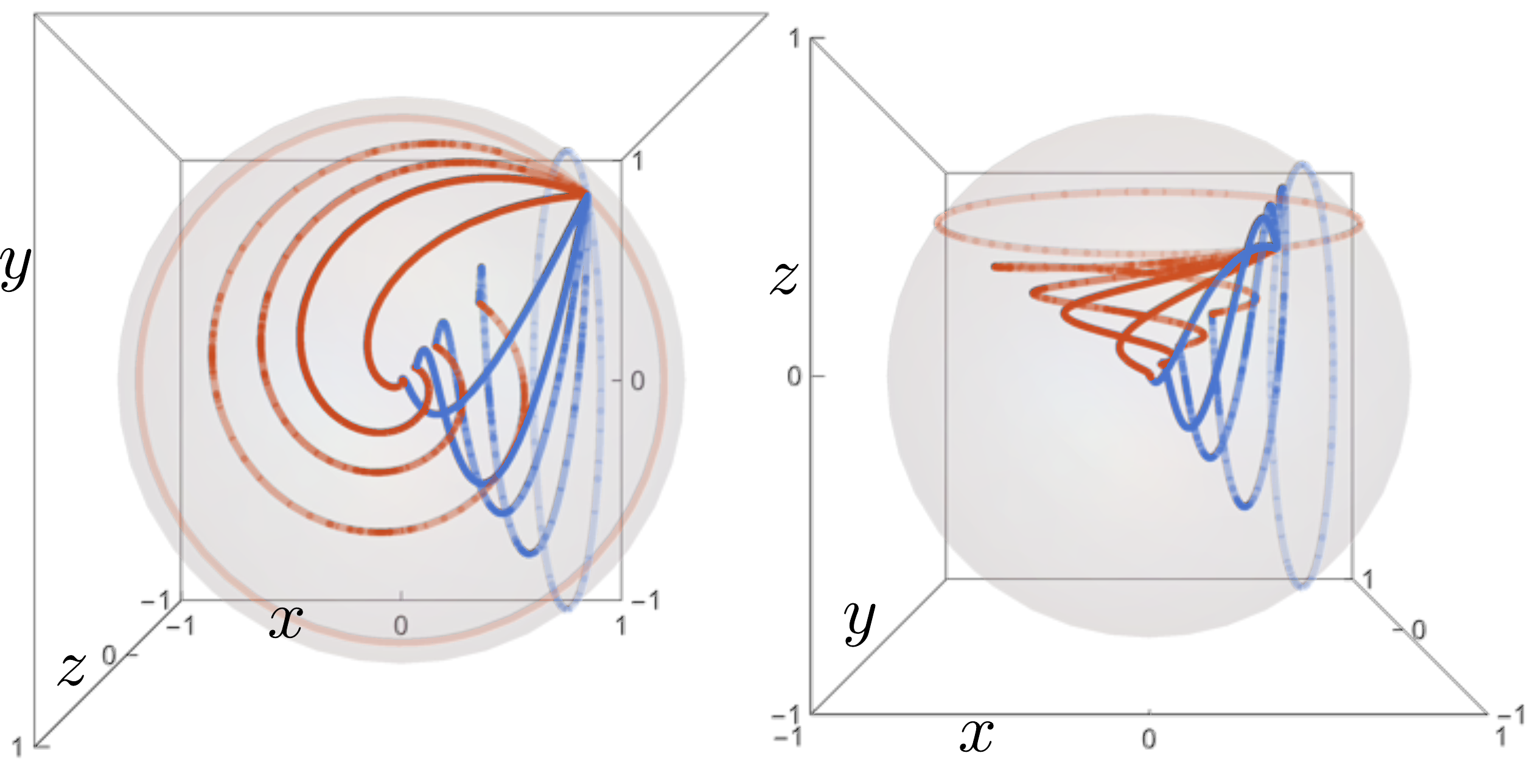}
	\caption{Visualizations of two unitary channels defined by $\vect{h}_1=(0,0,1)^\intercal$ and $\vect{h}_2=(1,0,0)^\intercal$, as a function of time
	$t\in \{0,\pi\}$ applied over an state with Bloch vector $\vect{r}_0=\frac{1}{\sqrt{3}}(1,1,1)^\intercal$, under a depolarizing channel with damping rate $\Gamma$. Each continues line is the trajectory of $\vect{r}_1(t)=e^{-\Gamma t} e^{2t[\vect{h}_1]}\vect{r}_0$ for different values of $\Gamma\in\{0,0.2,0.4,0.8,1\}$ (the opacity increase with $\Gamma$). The dashed lines correspond to the trajectories $\vect{r}_2(t)=e^{-\Gamma t} e^{2t[\vect{h}_2]}\vect{r}_0$ for the same values $\Gamma$. Both figures present the same trajectories from different perspectives.}
	\label{fig:DecoherenceAndRotations}
\end{figure}

\subsubsection{Comparison of distinguishability measures}

\noindent  We are going to analyze the transmission distance between quantum channels. For $N=2$, the Choi matrix of an arbitrary quantum channel $\mathcal{M}$ can be written as \cite{Shahbeigi2018},
\begin{align}\label{eq:ChoiState}
\rho_{\mathcal{M}}=\frac{1}{4}\left(\mathbbm{1}\otimes\mathbbm{1}+\mathbbm{1}\otimes \vect{l}\cdot\vect{\sigma}+\sum_{i,j}\Lambda'_{ij}\sigma_i\otimes\sigma_j\right),
\end{align}
where $\Lambda$ and $\vect{l}$ denote the distortion matrix and translation vector of the map, see Eq. \eqref{eq:affinedecomp}, while $\Lambda'_{ij}=(C\Lambda^\intercal)_{ij}$, with $C=\textrm{diag}(1,-1,1)$.


Following Sec. \ref{sec:transmissiondistancebetQC}, we have to compare the evolved Choi states,
	\begin{align*}
		\rho_m=(\mathcal{E}^{\text{dec}}_m\otimes\mathbbm{1}) (\ket{\Phi}\bra{\Phi}),
	\end{align*}
where $\mathcal{E}_m^{\text{dec}}=\mathcal{D}\!\circ\!\mathcal{U}_m$. We use the transmission distance \eqref{eq:ChoiDisting}, which can be obtained by inserting Eq. \eqref{eq:affinedistinguishingdeco} into Eq. \eqref{eq:ChoiState} with $\vect{l}=0$. Note that  calculation of $d_\text{t}^{\text{iso}}(\mathcal{D}\!\circ\!\mathcal{U}_1,\mathcal{D}\!\circ\!\mathcal{U}_2)$ involves two non-commuting Choi states. 

Let us evaluate the entropic channel divergence \eqref{eq:entropicchanneldiv} for unital quantum channels \eqref{eq:affinedistinguishingdeco},
with distortion matrix proportional to a rotation matrix.  
\begin{proposition} 
	The entropic channel divergence \eqref{eq:entropicchanneldiv} between two unital maps $\mathcal{E}_1$ and $\mathcal{E}_2$,
	with distortion matrices $\Lambda_1=\alpha_1 R_1$
	and $\Lambda_2=\alpha_2 R_2$, respectively 
	  reads
	\begin{align}\label{eq:ECD1}
	d_{\text{t}}^{K=1}\left(\mathcal{E}_1,\mathcal{E}_2\right)=\sqrt{f\!\left( r_{\text{opt}} \right)-\frac{[f\!\left(\alpha_1^2 \right)+f\!\left(\alpha_2^2 \right)]}{2}},
	\end{align}
	with
	\begin{align}\label{eq:ropt}
	r_{\text{opt}}=\alpha_1^2+\alpha_2^2+
	\alpha_1\alpha_2
	\left({\rm Tr}{[\Lambda_{1}^\intercal\Lambda_{2}]}-1\right),
	\end{align}
 and the function $f(\cdot)$ defined in Eq. \eqref{eq:functionf}.
\end{proposition}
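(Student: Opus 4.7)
\emph{Proof plan.} My plan is to compute the supremum in Eq.~\eqref{eq:entropicchanneldiv} by reducing it to a single scalar optimization via the Bloch parametrization, exploiting orthogonality of $R_1,R_2$ to make the individual output entropies independent of $\vect{r}$, and finally extremizing the remaining quadratic form using the rotation structure of $R_1^{\intercal}R_2$.

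First, I would restrict the supremum to pure input states, as noted after Eq.~\eqref{eq:entropicchanneldiv}. For a qubit state with unit Bloch vector $\vect{r}$, the output $\mathcal{E}_k(\rho)$ has Bloch vector $\Lambda_k\vect{r}$, and the average channel $\overline{\mathcal{E}}=(\mathcal{E}_1+\mathcal{E}_2)/2$ produces Bloch vector $\overline{\Lambda}\vect{r}$ with $\overline{\Lambda}=(\Lambda_1+\Lambda_2)/2$. Using the qubit entropy formula $\text{S}(\rho)=f(|\vect{r}|^2)$ (cf.~Eq.~\eqref{eq:functionf}) together with the orthogonality $R_k^{\intercal}R_k=I$, one finds $|\Lambda_k\vect{r}|^2=\alpha_k^2$ independent of $\vect{r}$. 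This is the decisive simplification: it collapses two of the three terms of the QJSD to constants, leaving
\begin{align*}
\text{QJSD}[\mathcal{E}_1(\rho),\mathcal{E}_2(\rho)] = f\!\left(|\overline{\Lambda}\vect{r}|^2\right) - \tfrac{1}{2}\left[f(\alpha_1^2)+f(\alpha_2^2)\right],
\end{align*}
so that only the first term depends on the input direction.

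Since $f$ is strictly decreasing on $[0,1]$, the supremum is attained where $|\overline{\Lambda}\vect{r}|^2=\tfrac{1}{4}\bigl(\alpha_1^2+\alpha_2^2+2\alpha_1\alpha_2\,\vect{r}^{\intercal}R_1^{\intercal}R_2\vect{r}\bigr)$ is \emph{minimized} over the unit sphere in $\mathbb{R}^3$. Writing $R=R_1^{\intercal}R_2$ as a rotation by angle $\theta$ about axis $\hat{\vect{n}}$, the Rodrigues decomposition $R=\cos\theta(I-\hat{\vect{n}}\hat{\vect{n}}^{\intercal})+\hat{\vect{n}}\hat{\vect{n}}^{\intercal}+\sin\theta[\hat{\vect{n}}]$ yields $\vect{r}^{\intercal}R\vect{r}=\cos\theta+(1-\cos\theta)(\vect{r}\cdot\hat{\vect{n}})^2$, whose minimum over unit $\vect{r}$ equals $\cos\theta$, attained on the whole plane perpendicular to the rotation axis $\hat{\vect{n}}$.

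Finally, I would translate the angle back into trace data via the identity $\operatorname{Tr}[R_1^{\intercal}R_2]=1+2\cos\theta$, substitute into the minimized quadratic form, and collect terms to read off $r_{\text{opt}}$ as stated, after which the closed-form for $d_{\text{t}}^{K=1}(\mathcal{E}_1,\mathcal{E}_2)$ is immediate. The main obstacle I anticipate is precisely this rotation extremization step, which plays the role of the Lagrange-multiplier argument used in the Pauli proposition but requires recognizing that the symmetric part of $R_1^{\intercal}R_2$ has a degenerate eigenvalue $\cos\theta$ on a two-dimensional eigenspace and a simple eigenvalue $1$ along the axis; the family of optimizers is therefore a circle of unit vectors orthogonal to $\hat{\vect{n}}$, rather than the discrete coordinate-axis set found in the diagonal Pauli setting of Eqs.~\eqref{eq:extre1}--\eqref{eq:extre3}.
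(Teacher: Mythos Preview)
Your argument is essentially identical to the paper's: both reduce the individual entropies to the constants $f(\alpha_k^2)$ via $R_k^{\intercal}R_k=I$, both minimize the remaining quadratic form by writing $R_1^{\intercal}R_2$ as a rotation and computing $\vect{r}^{\intercal}R\vect{r}=\cos\theta+(1-\cos\theta)(\vect{r}\cdot\hat{\vect{n}})^2$, and both convert $\cos\theta$ to trace data via $\operatorname{Tr}R=1+2\cos\theta$. The only discrepancy is bookkeeping: your (correct) factor $1/4$ in $|\overline{\Lambda}\vect{r}|^2$ gives $\tfrac{1}{4}\bigl(\alpha_1^2+\alpha_2^2+2\alpha_1\alpha_2\cos\theta\bigr)$, whereas the paper's proof defines $\Lambda_{\overline{\mathcal{E}}}=\alpha_1 R_1+\alpha_2 R_2$ without the averaging $1/2$, so when you ``collect terms to read off $r_{\text{opt}}$ as stated'' you will find a mismatch by a factor of four with the printed expression.
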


\begin{proof}
	Employing the same reasoning used to derive 
	Eq.~\eqref{eq:extremumQJSD}, we arrive at,
	\begin{align}
	\text{QJSD}\!\left[\mathcal{E}_1(\rho),\mathcal{E}_2(\rho)\right]&=f\left[\vect{r}\cdot(\Lambda_{\overline{\mathcal{E}}}^\intercal \Lambda_{\overline{\mathcal{E}}}) \vect{r} \right]-\frac{1}{2}f\!\left(\alpha_1^2 \right)+\nonumber\\
	&-\frac{1}{2}f\!\left(\alpha_2^2 \right), \label{eq:QJSDdisting}
	\end{align}
	where
	$\Lambda_{\overline{\mathcal{E}}}=\alpha_1 R_1+\alpha_2 R_2$. To calculate the entropic channel divergence we need to optimize the function $f$ used in Eq. \eqref{eq:functionf}, 
	$$f\left[\vect{r}\cdot(\Lambda_{\overline{\mathcal{E}}}^\intercal \Lambda_{\overline{\mathcal{E}}}) \vect{r} \right].$$
	As $f\!(x)$ is a decreasing function of $x$ in $[0,1]$, we have to minimize 
	\begin{align}\label{eq:addEq}
	\vect{r}\cdot(\Lambda_{\overline{\mathcal{E}}}^\intercal \Lambda_{\overline{\mathcal{E}}}) \vect{r}=\alpha_1^2+\alpha_2^2+2p_1p_2 \vect{r}\cdot(\Lambda_{1}^\intercal \Lambda_{2}) \vect{r},
	\end{align}
	over the sphere $\vect{r}\cdot\vect{r}=1$.
	
	Taking $\Lambda_{1}^\intercal\Lambda_{2}=e^{\phi[\vect{h}]}$ for some $\vect{h}$ such that $\vect{h}^2=1$, see Eq. \eqref{eq:affinedistinguishing}, we find that
	\begin{align*}
	\vect{r}\cdot(\Lambda_{1}^\intercal \Lambda_{2}) \vect{r}=\cos\phi + (1-\cos\phi)(\cos{\gamma})^2,
	\end{align*}
	where $\cos{\gamma}=\vect{h}\cdot \vect{r}$.
	The minimum of the function
	$\vect{r}\cdot(\Lambda_{1}^\intercal \Lambda_{2}) \vect{r}$ in the sphere $\vect{r}^2=1$ is correspondingly given by the minimum of the previous function over the parameter $\gamma$. It is straightforward to show that $\gamma=\pi/2$ minimizes $\vect{r}\cdot(\Lambda_{1}^\intercal \Lambda_{2}) \vect{r}$, and therefore,
	\begin{align*}
	\min_{\vect{r}}\left\{\vect{r}\cdot e^{\phi[\vect{h}]} \vect{r}\right\}=\cos \phi.
	\end{align*}
	Finally, employing the following equality,
	\begin{align*}
	\cos\phi=\frac{\Trr{e^{\phi[\vect{h}]}}-1}{2},
	\end{align*}
	we arrive at, 
	\begin{align}\label{eq:minr}
	\min_{\vect{r}}\left\{\vect{r}\cdot(\Lambda_{1}^\intercal \Lambda_{2}) \vect{r}\right\}=\frac{\Trr{\Lambda_{1}^\intercal\Lambda_{2}}-1}{2}.
	\end{align}
	By inserting this in Eq. \eqref{eq:addEq}, we obtain Eq. \eqref{eq:ECD1}.
\end{proof}

On the other hand, if $\vect{h}'_m=R \vect{h}_m$,
with $R$ denoting an orthogonal matrix of order three, the corresponding affine matrix $\Lambda_m^{\text{dec}}$
transforms as 
\begin{align}\label{eq:unitarytransf}
	\tilde{\Lambda}_m^{\text{dec}}=R \Lambda_m^{\text{dec}} R^\intercal .
\end{align}
Therefore, the quantum operation $\tilde{\mathcal{E}}^{\text{dec}}_m$ associated with $\tilde{\Lambda}_m^{\text{dec}}$ can be written as $\tilde{\mathcal{E}}^{\text{dec}}_m=\mathcal{R}\!\circ\!\mathcal{E}^{\text{dec}}_m\!\circ\!\mathcal{R}^{-1}$, where $\mathcal{R}$ is the unitary channel corresponding to the rotation
matrix $R$, while $\mathcal{E^{\text{dec}}}_m$ is determined by $\Lambda_m^{\text{dec}}$. 

Since the distance measures between quantum operations satisfy the unitary invariance \eqref{eq:unitaryinv},
the distinguishability between operations $\mathcal{E}^{\text{dec}}_1$ and $\mathcal{E}^{\text{dec}}_2$ specified by $\vect{h}_1$ and $\vect{h}_2$, respectively, depends only on the angle 
$$\theta=\arccos \vect{h}_1 \cdot \vect{h}_2,$$ 
and the damping rate $\Gamma$. Thus, without losing generality we can fix the vector $\vect{h}_1$ in 
the direction $z$. 

In Fig. \ref{fig:dJSvsBurPURE}, we present the transmission distance $d_\text{t}^{\text{iso}}(\mathcal{U}_1,\mathcal{U}_2)$ and the Bures distance $D_B(\mathcal{U}_1,\mathcal{U}_2)$ defined in \eqref{eq:buresdistanceChoi}.
Both quantities are computed in the noiseless case, $\Gamma=0$,
and shown as functions of time $t$
for different values of the angle $\theta$. 
In this case the entropic channel divergence
is equal to the transmission distance \eqref{eq:ChoiDisting} between quantum channels. The Bures distance $D_B$ is based on the quantum fidelity between the Choi matrices -- see \cite{Raginsky2001,Childs2000}.

The time in which the distinguishability is maximal, according to  the measures analyzed, reads
\begin{align}
	t_{\text{max}}= \left\{ \begin{matrix}
		\pi/2 &&\text{if } \cos \theta \geq 0 \\
	\frac{1}{2} \cos ^{-1}\left(\frac{\cos \theta +1}{\cos \theta -1}\right) &&\text{if } \cos \theta < 0
	\end{matrix}\right. . \label{eq:noiselesstimes}
\end{align} 
Note that if  $\cos \theta > 0$,
both unitary operations cannot be distinguished with probability one at any time.  However, if $\cos \theta \leq 0$, there exists a time in which the pure Choi states are orthogonal and can be perfectly distinguished at the selected interaction time $t_\text{max}$.

\begin{figure}
	\centering
	\includegraphics[width=.37\textheight]{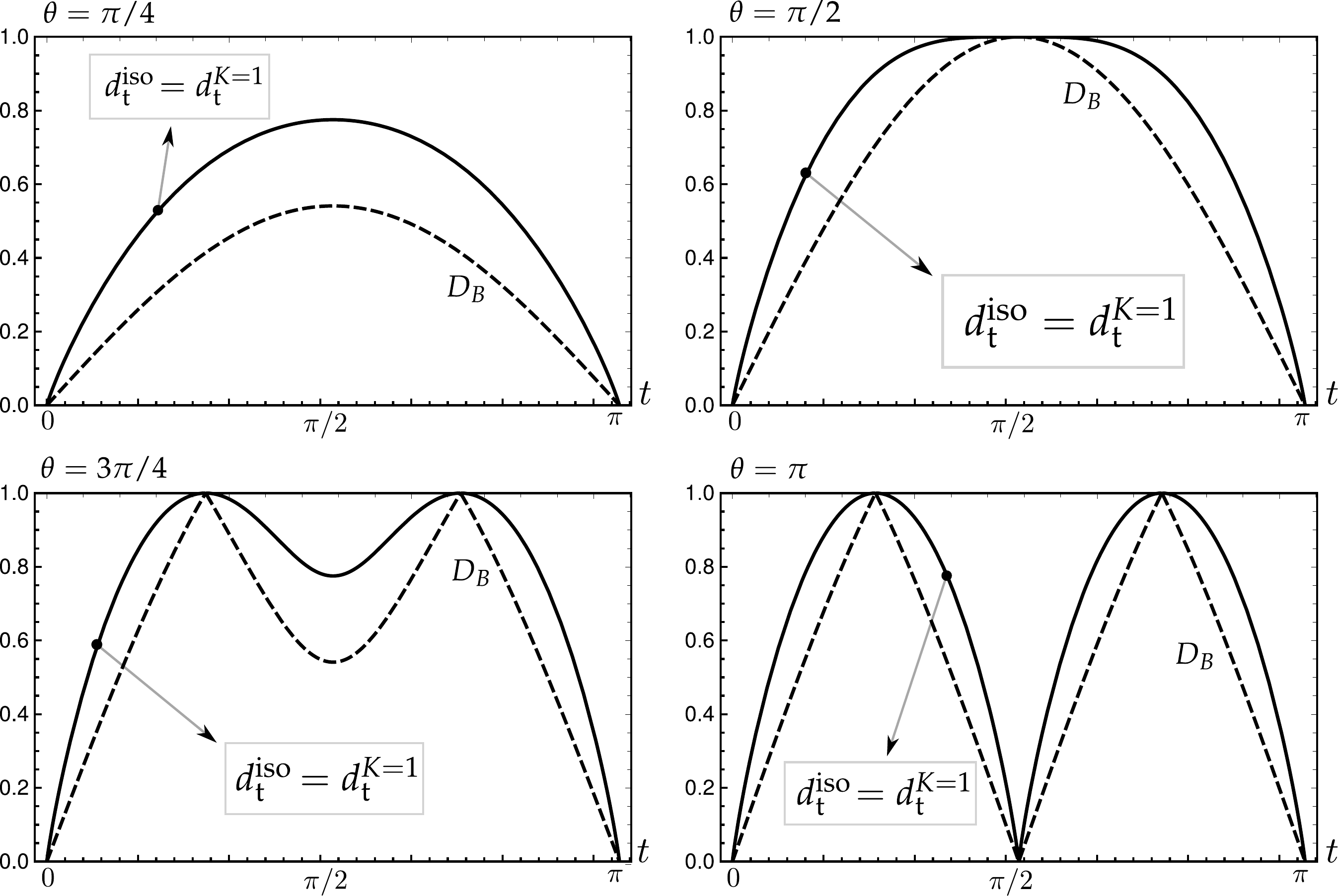}
	\caption{Transmission distance \eqref{eq:ChoiDisting}, Bures distance \eqref{eq:buresdistanceChoi}, and the entropic channel divergence \eqref{eq:entropicchanneldiv}, between two unitary operations \eqref{eq:unitaryevol}, whose corresponding vectors $\vect{h}_1$ and $\vect{h}_2$ form an angle $\theta\in\{\pi/4,\pi/2,3\pi/4,\pi\}$, as functions of time $t$. We assume that $\hbar=\omega=1$,
	so all quantities are dimensionless.}
	\label{fig:dJSvsBurPURE}
\end{figure}

Let us take into account effects of the decoherence. The depolarizing channel \eqref{eq:affinedistinguishingdeco}, transforms the original unitary rotations into channels that send states closer to the maximally mixed state -- see Fig. \ref{fig:DecoherenceAndRotations} -- so the problem of distinguishability between the channels becomes more difficult.

This problem was already treated in Ref. \cite{Childs2000}, where it was suggested to select a constant initial state, with the Bloch vector $\vect{r}_0=(1,0,0)^\intercal$, and to choose the optimal time as the one minimizing the error probability $P_{\text{error}}$.
Such an optimal time $t_{\text{opt}}$ corresponds
to the maximal distinguishability between both
evolved states, 
\begin{align}\label{eq:errprob}
P_{\text{error}}=\frac{1}{2}[1-\exp(-pt)\left|\sin t \right|].
\end{align}
At a time $t_{\text{opt}}=\arctan(1/p)$, $P_{\text{err}}$ is minimized and thus the information gained by the measurement is maximized.

Regarding  entropic distinguishability measures, Fig. \ref{fig:DecoherenceTs} displays behaviour of the transmission distance under unitary evolution and decoherence, for angle $\theta=\pi/2$ and exemplary values of the damping rate, $\Gamma\in \{0, 0.3 , 0.6 , 0.9 , 1.2 , 1.5 , 1.8\}$.

\begin{figure}
	\centering
	\includegraphics[width=.35\textheight]{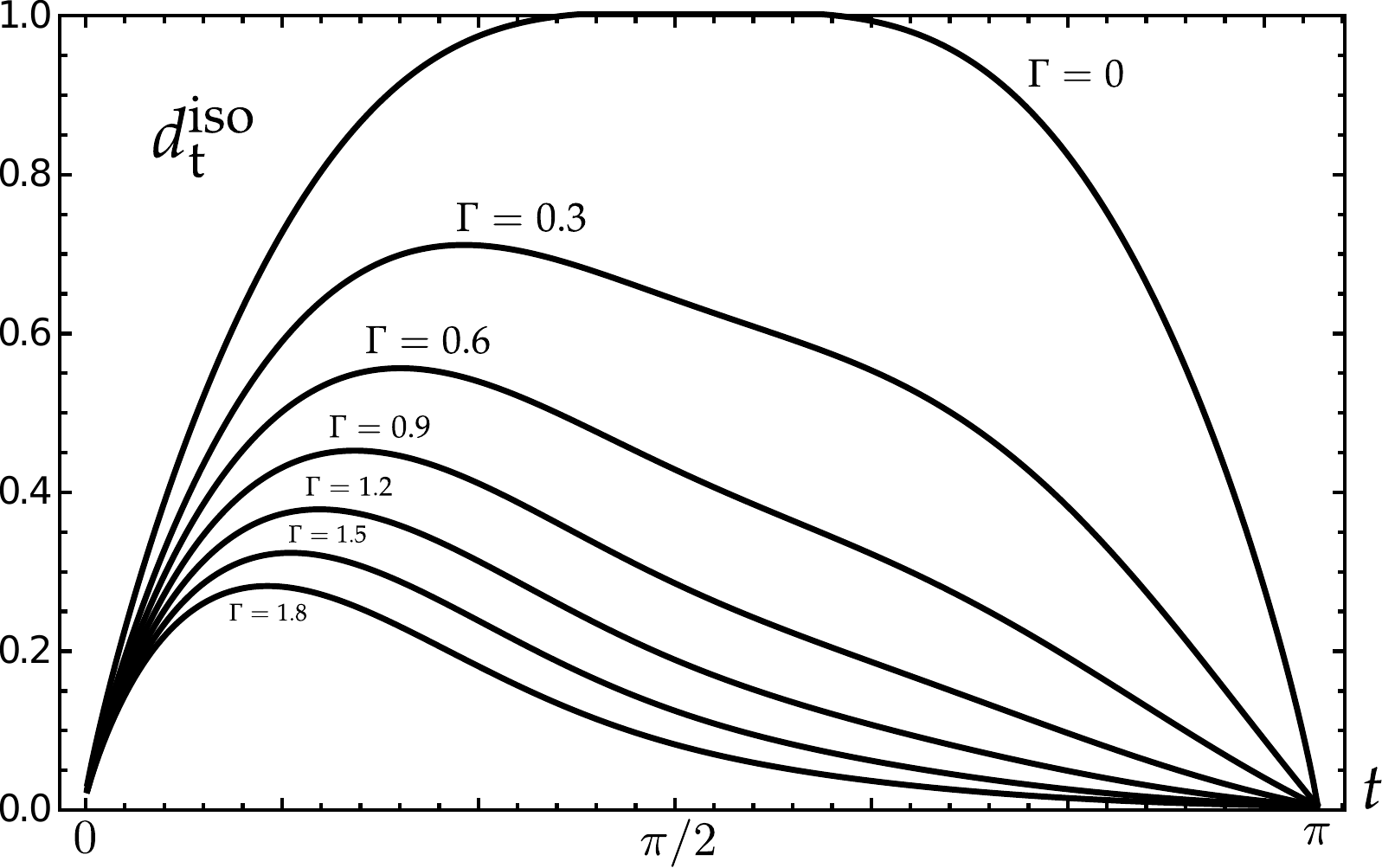}
	\caption{Transmission distance $d_{\text{t}}^{\text{iso}}(\mathcal{D} \circ \mathcal{U}_1,\mathcal{D} \circ \mathcal{U}_2)$ as a function of time $t$, where the affine decomposition of the maps $\mathcal{D}\circ\mathcal{U}_i$ is given by \eqref{eq:affinedistinguishingdeco}. The angle
	between the Bloch vectors defining both Hamiltonians ($\vect{h}_1$ and $\vect{h}_2$) is $\theta=\pi/2$. Here $\mathcal{D}$ denotes the depolarizing channel with damping rate $\Gamma$, which labels the curves. The larger damping rate, the shorter time $t_\text{max}$ of maximal distinguishability.}
	\label{fig:DecoherenceTs}
\end{figure}

The entropic channel divergence is given by taking $\alpha_i=e^{-\Gamma t}$ and $\Lambda_i=e^{2t[\vect{h}_i]}$, with $i=1,2$, in Eq. \eqref{eq:ECD1}. 
In this way one obtains,
	\begin{align}
		\Trr{\Lambda_{1}^\intercal\Lambda_{2}}\!&=\!2 \cos (2 \theta ) \sin ^4\!(t)\!+\!2 \cos(\theta ) \sin ^2\!(2 t)\!+\!\cos(2 t)+\nonumber\\&+\frac{3}{4} \cos (4 t)+\frac{5}{4},
	\end{align}
\noindent where $\theta$ denotes the angle between 
both Bloch vectors,
$\vect{h}_1$ and $\vect{h}_2$. Inserting \eqref{eq:minr} into \eqref{eq:QJSDdisting}, we arrive at the dependence of the entropic channel divergence on the angle $\theta$, the time $t$ and the damping parameter $\Gamma$.

One can pose a natural question, which interaction time is optimal to distinguish Hamiltonians $H_1$ and $H_2$ under decoherence? In the noiseless situation $\Gamma=0$, the entropic channel divergence results to be equal to the transmission distance between quantum channels, Eq. \eqref{eq:ChoiDisting}, therefore, the interaction time \eqref{eq:noiselesstimes} is  optimal for this measure as well. In presence of decoherence, each distinguishability measure has its own behavior, leading to different values of optimal interaction times. Fig. \ref{fig:times} shows that the best times to measure the distinguishability related to the transmission distance  $d_{\text{t}}^{\text{iso}}$ 
are shorter than those arising from minimizing the error probability of distinguishing the two evolved states \eqref{eq:errprob}, proposed in \cite{Childs2000}.

\begin{figure}
	\centering
	\includegraphics[width=.35\textheight]{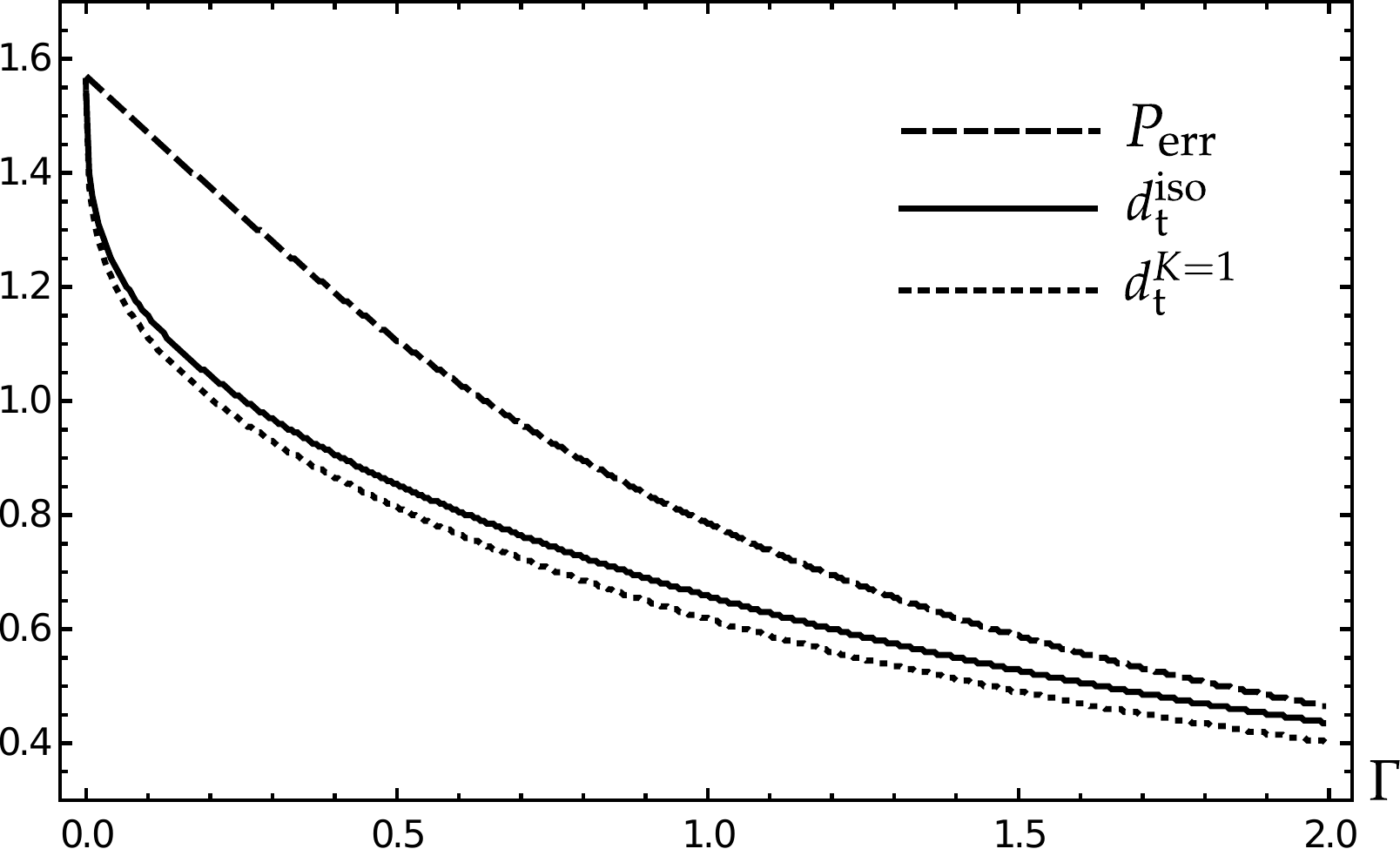}
	\caption{Optimal times as a function of the noise parameter $\Gamma$, in the distinguishability of Hamiltonians, see Sec. \ref{sec:applicationsHamil}. The maps are given by  \eqref{eq:affinedistinguishingdeco}. The dashed line corresponds to optimal times for the probability of error, $P_{\text{err}}$, see \eqref{eq:errprob}. Continuous line
	represents the transmission distance $d_{\text{t}}^{\text{iso}}$
	between quantum channels \eqref{eq:ChoiDisting}, while the dotted line corresponds to the optimal times in the case of the entropic channel divergence, $d_{\text{t}}^{K=1}$,
	see  \eqref{eq:entropicchanneldiv}.}
	\label{fig:times}
\end{figure}

\section{Concluding remarks}\label{sec:concludingR}

\noindent We \diego{have} introduced two entropic measures of distinguishability between quantum operations
using the square root of the quantum Jensen-Shannon divergence, also called \textit{transmission distance}. We have investigated their properties and physical interpretations. 

In the case of the \textit{transmission distance between quantum channels} $d_{\text{t}}^{\text{iso}}$, 
we have shown that this measure satisfies several criteria for a suitable distance measure between maps.
Even though this quantity does not satisfy the chaining property, this is the case if one of the maps applied first
is bistochastic, which is a key property for estimating errors in quantum information protocols \cite{Gilchrist2005}. Furthermore, the transmission distance between quantum channels does not require any optimization procedure and it can be directly obtained by calculating the entropy of a map, defined in \cite{Roga2011}. Regarding the physical interpretation of this measure, $d^{\text{iso}}_{\text{t}}$ is the dense coding capacity for a noiseless dense coding protocol. 
It is therefore fair to expect
that the transmission distance between quantum channels is a good candidate for error or diagnostic measures.

In Sec. \ref{sec:entropicdisting}, we \diego{have} introduced the \textit{entropic channel divergence} $d_{\text{t}}^{K}$,
parameterized by the size $K$ of the ancilla.
In addition to the requirements mentioned in \cite{Raginsky2001,Gilchrist2005}, we have shown that $d_{\text{t}}^{K}$ satisfies the \textit{chain rule}. 
This property allows one to prove the amortization collapse of the entropic channel divergence,
which can be useful to obtain new single-letter converse bounds on the capacity of adaptive protocols in channel discrimination theory \cite{Wilde2020a}. Regarding physical motivation, $d_{\text{t}}^{K}$ is the square root of the quantum reading capacity in the equiprobable case \cite{Pirandola2011}, and it can be identified as the capacity of a dense coding protocol with a resource influenced by decoherence \cite{Laurenza2020}.

In Sec. \ref{sec:connection}, we have considered the case of Choi-stretchable channels. For these kinds of quantum operations, $d_{\text{t}}^{\text{iso}}$ and $d_{\text{t}}^{N}$ are equal, establishing a particular situation, in which the transmission distance between quantum operations is equal to the stabilized entropic channel divergence \eqref{eq:entropicchanneldivstab}.

To demonstrate the analyzed measures in action,
we \diego{have} investigated the distinguishability of two Pauli channels
and provided analytical expressions for the distance $d_{\text{t}}^{\text{iso}}$
and the entropic divergence $d_{\text{t}}^{K=1}$. 
As the standard teleportation protocol can be written as a Pauli map, we \diego{have studied} the presence of noise in quantum teleportation by calculating both distinguishability measures. 
The transmission distance $d_{\text{t}}^{\text{iso}}$ between quantum channels occurred to be the most sensitive to decoherence, while the trace distance between the corresponding Choi states is more sensitive than the entropic channel divergence.

In the case of a Hamiltonian evolution under decoherence, we \diego{have}
compared the distance $d_{\text{t}}^{\text{iso}}$ and the divergence $d_{\text{t}}^{K=1}$ between the quantum operations with the Bures distance between the corresponding Choi states
and the \textit{probability of error}, originally studied \cite{Childs2000}. In the absence of noise, the distance measures defined by employing the transmission distance 
become equal, $d_\text{t}^{\text{iso}}=d_\text{t}^{K=1}$, showing a smoother behaviour than the Bures distance and exhibiting equal times of maximal distinguishability. 

To distinguish between dynamics generated
by two Hamiltonians subjected to decoherence,
we have studied the entropic measures $d_{\text{t}}^{\text{iso}}$ and $d_{\text{t}}^{K=1}$
 and compared them with the error  probability $P_{\text{err}}$. 
For these measures we identified the time window
of maximal distinguishability 
while varying the decoherence rate $\Gamma$.
The above observations suggest that the measures of the distance between quantum operations based on the square root of the Jensen-Shannon divergence
(in this case equivalent  to the Holevo quantity) introduced in this work will find their applications in further theoretical and experimental studies.

\section*{Acknowledgments}
\noindent D.G.B. and P.W.L. 
are grateful to the Jagiellonian University for the hospitality during their stay in Cracow.
They acknowledge financial support by Consejo Nacional de Investigaciones Científicas y Técnicas (CONICET), 
and by 
Universidad Nacional de Córdoba (UNC), Argentina.
K.{\.Z}. is supported by Narodowe Centrum Nauki under the Quantera project number 2021/03/Y/ST2/00193 
and by Foundation for Polish Science 
under the Team-Net project no. POIR.04.04.00-00-17C1/18-00.

\section{Appendix}
	
\subsection{Channel divergence with trace distance between unital channels} \label{sec:tracedistanceappendix}
\noindent	Let us calculate
	$$d_{\text{Tr}}^{K=1}(\mathcal{E},\mathcal{F})=\sup_{\rho\in\mathcal{M}_N}T[\mathcal{N}(\rho),\mathcal{M}(\rho)],$$
	for two arbitrary unital quantum operations $\mathcal{N}$ and $\mathcal{M}$ with $N=2$, being T$(\cdot,\cdot)$ the trace distance.
	
	Performing  required calculations we arrive at an expression,
		\begin{align}
			d_{\text{Tr}}^{K=1}(\mathcal{N},\mathcal{M})= \frac{1}{2}\max_{\vect{r}}\sqrt{\vect{r}\cdot \Delta \vect{r}},
		\end{align}
	 where $\vect{r}$ denotes the Bloch vector $\rho$ and $\Delta=(\Lambda_\mathcal{N}-\Lambda_\mathcal{M})^{\intercal}(\Lambda_\mathcal{N}-\Lambda_\mathcal{M})$.
	
	We need now to optimize $\sqrt{\vect{r}\cdot \Delta \vect{r}}$ over the sphere $\vect{r}^2=1$. As $\Delta$ is a symmetric positive square matrix, we can take its spectral decomposition,
		\begin{align}
			\Delta=\sum_i \lambda_i^\Delta \vect{x}_i \vect{x}_i^\intercal,
		\end{align}
		where $\vect{x}_i$ denotes the eigenvector of $\Delta$ corresponding to the eigenvalue $\lambda_i$.
		One obtains, therefore,
		\begin{align*}
			\vect{r}\cdot\Delta\vect{r}=\sum_i \lambda_i^\Delta (\vect{r}\cdot \vect{x}_i)^2. 
		\end{align*}
		Having in mind that $\lambda_i^\Delta\geq 0$ and $(\vect{r}\cdot \vect{x}_i)^2 \in [0,1]$ for any $i$, it is clear that the maximum is achieved when $\vect{r}=\vect{x}_k$ with $k$ such that $\lambda_k^{\Delta}\geq \lambda_i^{\Delta}$ for all $i$. This implies directly Eq. \eqref{eq:channeldivTrUnital},  specifically,		
		\begin{align*}
		d_{\text{Tr}}^{K=1}(\mathcal{N},\mathcal{M})=\frac{1}{2}\max_i \sqrt{\lambda_i^{\Delta}},
		\end{align*}
		where $\{\lambda_i^{\Delta}\}_i$ is the set of eigenvalues of the matrix $$\Delta=(\Lambda_\mathcal{N}-\Lambda_\mathcal{M})^{\intercal}(\Lambda_\mathcal{N}-\Lambda_\mathcal{M}).	$$ 
	

\subsection{Upper bound for the transmission distance}\label{sec:appendixBounds}
\noindent 

\noindent Two different upper bounds for the transmission distance $d_{\text{t}}$ between quantum states can be found in the literature. One in terms of the entropic distance $D_{E}$ defined in Eq. \eqref{eq:boundK} \cite{Lamberti2008}
and the other one based on  the  square root of the trace distance $\sqrt{\text{T}}$ \cite{Briet2009}.

In Eq. \eqref{eq:bounds} we have included the corresponding bound for quantum maps,
\diego{\begin{align}
	\label{last_one}
	d_{\text{t}}^{\text{iso}}(\mathcal{E},\mathcal{F}) \leq \min \left\{\sqrt{ T(\mathcal{E},\mathcal{F})} \ , \ D_E(\mathcal{E},\mathcal{F})\right\}.
	\end{align}}
Note that the function {\sl minimum} appears in this bound. 
In Fig. \ref{fig:bounds}, we analyze an ensemble of random pairs of Choi states of order four, 
corresponding to unital Pauli maps,
and compared the distances given by $\sqrt{T}$ and $D_E$ between them. Numerical results show that 
for some pairs of channels it holds $\sqrt{T}>D_E$ and for others $\sqrt{T}<D_E$. 
These observations imply that using the
function {\sl minimum} in Eq. \eqref{last_one}
is justified as it  makes the upper bound stronger.

\begin{figure}
	\centering
	\includegraphics[width=.35\textheight]{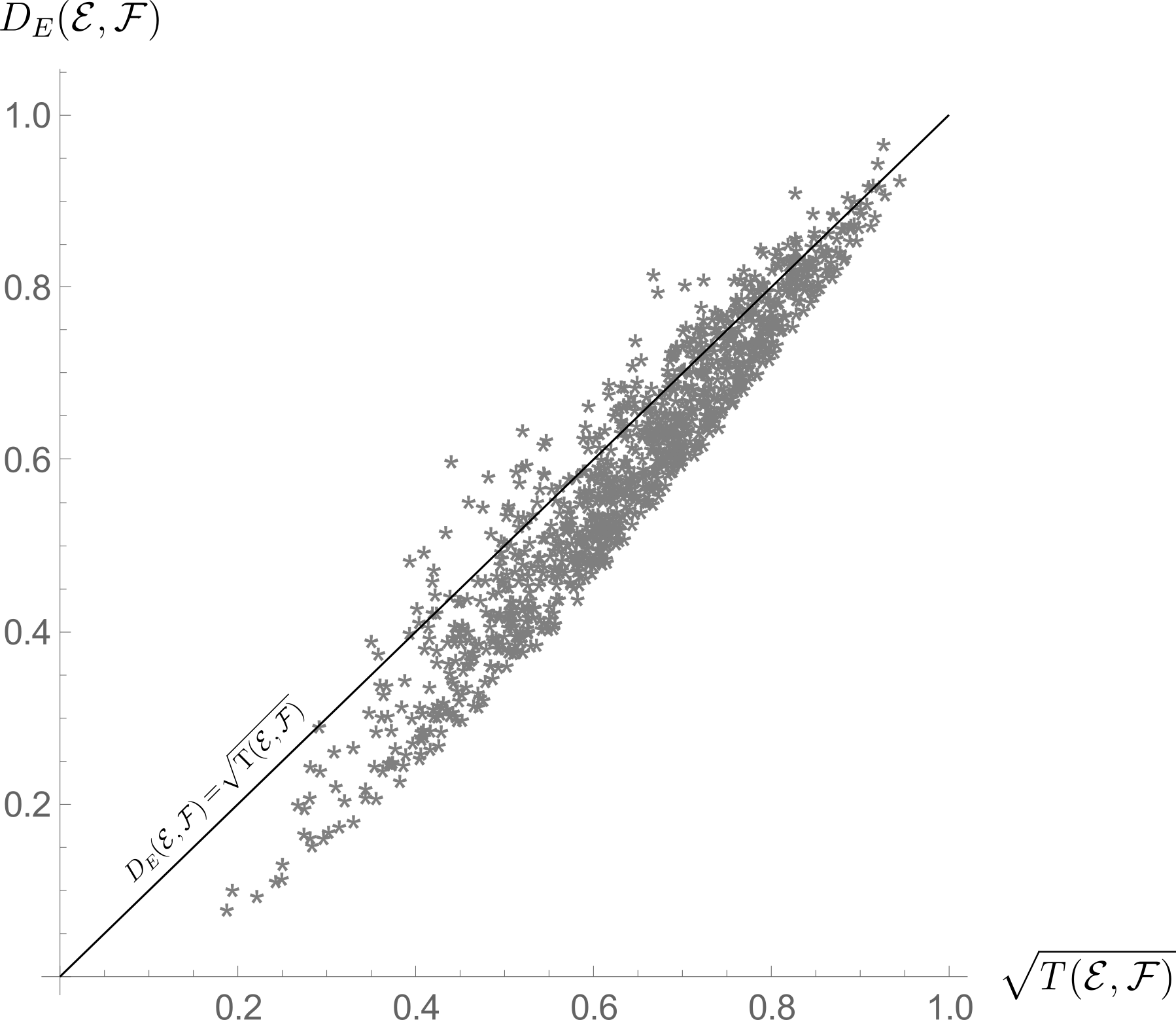}
	\caption{
	Square root of the trace distance
	$\sqrt{T(\mathcal{E},\mathcal{F})}$ 
	between random  Choi states,
	and their entropic distance
	 $D_E(\mathcal{E},\mathcal{F})$, defined in Eqs. \eqref{eq:tracedistanceChoi} and \eqref{eq:KarolmeasureChoi}, between 1000 
	   pairs of channels
	   taken randomly according to the flat measure
	  in the regular tetrahedron of Pauli channels.
	   As points are scattered on both sides of the diagonal, these results show that the 
	     {\sl min} function should be used in the upper bound
	   \eqref{last_one}.
	   }
	\label{fig:bounds}
\end{figure}

\bibliographystyle{ieeetrv7}
\bibliography{libraryDEF}

\end{document}